\DeclarePairedDelimiter\ceil{\lceil}{\rceil}
\definecolor{Red}{rgb}{1,0,0}
\definecolor{Blue}{rgb}{0,0,1}
\definecolor{Olive}{rgb}{0.41,0.55,0.13}
\definecolor{Yarok}{rgb}{0,0.5,0}
\definecolor{Green}{rgb}{0,1,0}
\definecolor{MGreen}{rgb}{0,0.8,0}
\definecolor{DGreen}{rgb}{0,0.55,0}
\definecolor{Yellow}{rgb}{1,1,0}
\definecolor{Cyan}{rgb}{0,1,1}
\definecolor{Magenta}{rgb}{1,0,1}
\definecolor{Orange}{rgb}{1,.5,0}
\definecolor{Violet}{rgb}{.5,0,.5}
\definecolor{Purple}{rgb}{.75,0,.25}
\definecolor{Brown}{rgb}{.75,.5,.25}
\definecolor{Grey}{rgb}{.5,.5,.5}
\newcommand{\ind}{\mathbbm{1}}
\newcommand{\G}{\mathbb{G}}
\renewcommand{\epsilon}{\varepsilon}
\newcommand{\EE}{\mathbb{E}}
\renewcommand{\G}{\mathbb{G}}
\newcommand{\ignore}[1]{\relax}
\newlength\myindent
\newtheorem{theorem}{Theorem}[section]
\newtheorem{remark}[theorem]{Remark}
\newtheorem{lemma}[theorem]{Lemma}
\newtheorem{conjecture}[theorem]{Conjecture}
\newtheorem{proposition}[theorem]{Proposition}
\newtheorem{coro}[theorem]{Corollary}
\newtheorem{definition}[theorem]{Definition}
\newcommand{\ER}{Erd{\H o}s-R\'{e}nyi }
\newcounter{parentnumber}
\def\BState{\State\hskip-\ALG@thistlm}
\definecolor{Red}{rgb}{1,0,0}
\definecolor{Blue}{rgb}{0,0,1}
\definecolor{Olive}{rgb}{0.41,0.55,0.13}
\definecolor{Green}{rgb}{0,1,0}
\definecolor{MGreen}{rgb}{0,0.8,0}
\definecolor{DGreen}{rgb}{0,0.55,0}
\definecolor{Yellow}{rgb}{1,1,0}
\definecolor{Cyan}{rgb}{0,1,1}
\definecolor{Magenta}{rgb}{1,0,1}
\definecolor{Orange}{rgb}{1,.5,0}
\definecolor{Violet}{rgb}{.5,0,.5}
\definecolor{Purple}{rgb}{.75,0,.25}
\definecolor{Brown}{rgb}{.75,.5,.25}
\definecolor{Grey}{rgb}{.5,.5,.5}
\definecolor{Pink}{rgb}{1,0,1}
\definecolor{DBrown}{rgb}{.5,.34,.16}
\definecolor{Black}{rgb}{0,0,0}
\newcommand{\A}{\mathcal{A}}
\newcommand{\cA}{{\mathcal A}}
\newcommand{\cE}{{\mathcal E}}
\newcommand{\cF}{{\mathcal F}}
\newcommand{\cS}{{\mathcal S}}
\renewcommand{\Pr}{\mathbb{P}}
\newcommand{\indic}[1]{\mathbbm{1}_{\{{#1}\}}}
\newcommand\bigpar[1]{\bigl(#1\bigr)}
\newcommand\Bigpar[1]{\Bigl(#1\Bigr)}
\newcommand\biggpar[1]{\biggl(#1\biggr)}
\newcommand\bigsqpar[1]{\bigl[#1\bigr]}
\newcommand\Bigsqpar[1]{\Bigl[#1\Bigr]}
\newcommand\biggsqpar[1]{\biggl[#1\biggr]}
\newcommand\bigcpar[1]{\bigl\{#1\bigr\}}
\newcommand\Bigcpar[1]{\Bigl\{#1\Bigr\}}
\newcommand\biggcpar[1]{\biggl\{#1\biggr\}}
\newcommand\bigceil[1]{\bigl\lceil#1\bigr\rceil}
\newcommand\lrceil[1]{\left\lceil#1\right\rceil}
\newcommand\bigfloor[1]{\bigl\lfloor#1\bigr\rfloor}
\newcommand\lrfloor[1]{\left\lfloor#1\right\rfloor}
\renewcommand{\epsilon}{\varepsilon}
\newcommand{\eps}{\varepsilon}
\newenvironment{romenumerate}[1][-5pt]{
\addtolength{\leftmargini}{#1}\begin{enumerate}
 }{\end{enumerate}}
\let\OLDthebibliography\thebibliography
\renewcommand\thebibliography[1]{
  \OLDthebibliography{#1}
  \setlength{\parskip}{0pt}
  \setlength{\itemsep}{0pt plus 0.3ex}
}
\title{Optimal Hardness of Online Algorithms for Large Independent Sets}
\author{\sf{David Gamarnik}\thanks{Massachusetts Institute of Technology; e-mail: {\tt gamarnik@mit.edu}} \and \sf{Eren C. K{\i}z{\i}lda\u{g}}\thanks{University of Illinois Urbana-Champaign; e-mail: 
 {\tt kizildag@illinois.edu}} \and \sf{Lutz Warnke}\thanks{University of California San Diego; e-mail: {\tt lwarnke@ucsd.edu} Supported by NSF~CAREER grant~DMS-2225631.}}
\date{January~25, 2026}
\begin{document}
\maketitle
\begin{abstract}
We study the algorithmic problem of finding a large independent set in the Erd{\H o}s-R\'{e}nyi  random graph $\G(n,p)$. In the dense regime (constant~$p$), the largest independent has size $2\log_b(n)$, while a simple greedy algorithm -- where vertices are revealed sequentially and each decisions depends only on previously seen vertices -- finds an independent set of size~$\log_b(n)$, where $b=1/(1-p)$. In his seminal 1976 paper, 
Karp challenged the community to either improve this guarantee or establish its hardness. Decades later, this remains one of the most prominent open problems in random~graphs. In the sparse regime, sharp lower bounds were obtained within the low-degree polynomial (LDP) framework via Overlap Gap Property (OGP) arguments tailored to stable algorithms. However, these techniques appear ineffective in the dense setting: LDP algorithms are conjectured to fail even in regimes where greedy succeeds, and we show that the greedy algorithm is unstable.

We consider a broad class of online algorithms with bounded access to future information, which includes the classical greedy as a special case. We show that for any $\epsilon>0$, no algorithm in this class finds an independent set of size $(1+\epsilon)\log_b(n)$ with non-trivial probability. Our lower bound holds uniformly for all edge-probabilities $p\in[d/n,1-n^{-1/d}]$, where $d$ is a large constant. For constant $p$, we further show that our result is nearly tight in the future-query budget by designing a quasipolynomial-time online algorithm that achieves $(1+\epsilon)\log_b(n)$ using slightly more queries.

Our results are thus tight both in the computational threshold $\log_b(n)$ and the future-query budget required to surpass it. In particular, they show that the  greedy algorithm is optimal among the online algorithms we consider, thus providing strong evidence for the hardness of Karp's problem. Our proof relies on a refined analysis of the geometry of large independent sets and establishes a novel variant of the OGP. While the OGP framework has been used primarily to rule out stable algorithms, online algorithms are inherently unstable, necessitating new technical ideas. We introduce several key ingredients, including: (i)~a temporal interpolation path that evolves with the algorithm, and (ii)~a carefully chosen stopping time that tracks when the algorithm's partial output reaches a critical size. We expect these ideas to be useful for extending OGP-based barriers to other online models.
\end{abstract}
\newpage
\section{Introduction}
In this paper, we study the algorithmic problem of finding a large independent set in the \ER random graph $\mathbb{G}(n,p)$, where each of the $\binom{n}{2}$ potential edges is present independently with probability~$p$. Throughout the introduction, we focus on the \emph{dense} regime where $p$ is a constant, though our main result holds for a broader range of~$p:=p(n)$.

In the worst-case, even approximating the largest independent set to within an $n^{1-\epsilon}$ factor is known to be NP-hard~\cite{hastad-clique,khot2001improved}. The situation however is drastically different and far more interesting in the presence of randomness. For $\G(n,p)$, the largest independent set has size approximately $2\log_b(n)$ with high probability (whp), where $b=b(p):=\frac{1}{1-p}$~\cite{matula1970complete,grimmett1975colouring,matula1976largest,bollobas1976cliques}. This is established via the moment method and is therefore non-constructive. On the other hand, a simple greedy algorithm --- making sequential decisions by only using the partial information available at a time --- finds an independent set of size approximately $\log_b(n)$ whp~\cite{grimmett1975colouring}. We describe this algorithm in greater detail below. In a seminal 1976 paper, Karp~\cite{karp1976probabilistic} asked whether it is possible to design a better polynomial-time algorithm that reaches $(1+\epsilon)\log_b(n)$ (whp) for an $\epsilon>0$, a tantalizing question that remains open decades later.

Karp's question is one of the most prominent algorithmic problems in random graphs and average-case complexity.\footnote{For example, in his 2014 ICM plenary lecture, Frieze~\cite{F2014} explicitly highlighted it as a challenging open problem that has surprisingly seen little progress.}
It is the oldest and arguably the most famous example of a \emph{statistical-computational gap} --- a gap between the optimal solution and the best value achieved by any known efficient algorithm. This gap has motivated Jerrum to introduce the infamous planted clique problem~\cite{jerrum1992large}, and analogous factor 2-gaps have since been discovered in other random graphs (see below), suggesting a universal phenomenon. For detailed discussions on these gaps, see the surveys~\cite{bandeira2018notes,gamarnik2021overlap,gamarnik2022disordered,gamarnik2025turing}. Despite extensive recent progress in understanding computational gaps in various algorithmic problems, Karp's problem has seen limited progress. In particular, neither an $\Omega(\log_b(n))$ algorithmic improvement, nor any evidence of the optimality of the greedy algorithm is known. As the greedy algorithm only uses partial information without accessing the entire graph, this is quite surprising, motivating us to ask:
\begin{center}
   \emph{Can we design a better algorithm or show that this is a hard problem?}
\end{center}

In this paper, we obtain the first sharp computational lower bounds for online algorithms for Karp's problem. More concretely, we establish that a broad class of online algorithms, which include the greedy as a special case, fail to find an independent set of size $(1+\epsilon)\log_b(n)$ (whp) for any $\epsilon>0$ (see Theorem~\ref{thm:Informal-Main}). Thus, the greedy algorithm is essentially optimal within this class. As we detail below, the best known algorithms for optimizing dense random graphs are online. As such, our results provide strong evidence towards the algorithmic hardness of Karp's problem. We next briefly highlight related random graph literature.

\paragraph{Sparse Random Graphs} The sparse case $\G(n,d/n)$ with constant $d$ has witnessed substantial progress. Mimicking the dense regime, it also exhibits an analogous factor-2 gap: the largest independent set has size approximately $2n \log(d)/d$~\cite{frieze1990independence,frieze1992independence,bayati2010combinatorial}, whereas the best known efficient algorithm achieves only $n \log(d)/d$~\cite{lauer2007large}, both whp in the double limit $n\to\infty$ followed by $d\to\infty$. This gap has led Gamarnik and Sudan~\cite{gamarnik2014limits,gamarnik2017} to introduce the \emph{Overlap Gap Property (OGP)} framework, through which sharp lower bounds were obtained for local algorithms~\cite{rahman2017local} and, more recently, for low-degree polynomial (LDP) algorithms~\cite{wein2020optimal}, a class capturing many popular algorithms including the one attaining~$n \log(d)/d$~\cite{lauer2007large,gamarnik2020low}. Similar sharp lower bounds were extended recently to sparse bipartite~\cite{perkins2024hardness} and multipartite~\cite{dhawan2024low} graphs.  As such, the sparse setting is relatively well-understood.

\paragraph{Dense Random Graphs} The situation differs markedly in the dense regime. While the existential and algorithmic guarantees for dense random graphs were obtained much earlier than the sparse, this is not the case for algorithmic lower bounds: formal hardness results for the dense regime are quite scarce. In particular, known hardness results regard only low-degree polynomial (LDP) algorithms~\cite{wein2020optimal,huang2025strong}, though it is unknown if the greedy algorithm~\cite{grimmett1975colouring,karp1976probabilistic} can be implemented as an LDP. In fact, in a recent American Institute of Mathematics workshop on low-degree methods~\cite{AIM2024}, it was conjectured that LDP algorithms perform much worse than the greedy on dense random graphs:
\begin{conjecture}\label{conj-main}
    On $\G(n,1/2)$, no degree $o(\log^2 n)$ polynomial finds an independent set of size $0.9\log_2 n$.
\end{conjecture}
\noindent
That is, while the LDP algorithms are highly effective across many high-dimensional problems including sparse random graphs (see~\cite{wein2025computational,kunisky2019notes} and references therein), they fail to find a half-optimal independent set in dense setting, a task attainable by simple greedy.
As such, LDP algorithms appear ineffective for dense random graphs. This highlights that the algorithmic paradigms for sparse and dense settings are quite different, where the online setting is particularly relevant in the latter. 

Additionally, prior work establishing algorithmic hardness in sparse random graphs through the OGP framework crucially leverage the stability properties of local and LDP algorithms. As we show in Proposition~\ref{Prop:Unstable} below, greedy algorithm (and online algorithms in general) may well in fact be unstable, making it difficult to adapt the techniques from sparse random graphs. For this reason, different techniques are required for dense random graphs.

Our proof substantially refines existing OGP-based arguments for algorithmic hardness, which exploit the geometry of large independent sets in~$\G(n,p)$. In particular, our refined approach introduces several new ideas that account for the algorithm's online nature, including (i) a temporal interpolation path that evolves with the algorithm, and (ii) a carefully chosen stopping time that tracks when the algorithm’s partial output reaches a certain critical size; we expect that these new ingredients will be useful for extending OGP-based barriers to other online models. 
For a technical overview, see Section~\ref{sec:TechOverview}.

\paragraph{Follow-up Work} As mentioned above, a fruitful line of work provided a thorough picture for the sparse case, both for $\G(n,d/n)$ as well as bipartite and hypergraph models. The progress however has been quite limited in the dense case. Our techniques fill this missing toolkit and have already proven useful beyond Karp's problem. In particular, leveraging our techniques, \cite{dhawan2025sharp} obtained sharp online lower bounds for dense bipartite random graphs\footnote{An even starker contrast between the dense and sparse regimes arises in dense random bipartite graphs: while a degree-1 polynomial succeeds in sparse case~\cite{perkins2024hardness},  a delicate online algorithm (beyond greedy) is required in the dense, see~\cite{dhawan2025sharp}.} 
analogous to sparse one~\cite{perkins2024hardness}, and a forthcoming work~\cite{hypergraph-wip} extends these to dense hypergraphs and multipartite models. Thus, for dense random graphs, our techniques help moving toward a more complete picture analogous to sparse setting.

\subsection{Online Algorithms: Informal Description}\label{sec:InformalOnline}
We next give an informal description of the class of online algorithms central to our results; see Definition~\ref{def:Operation} for formal statements. As a motivation, we begin with the classical greedy algorithm~\cite{grimmett1975colouring,karp1976probabilistic}, reproduced as Algorithm~\ref{Alg:Greedy} below. Starting from the empty set $I:=\varnothing$, the greedy algorithm sequentially considers the vertices $t\in\{1,\dots,n\}$ and adds $t$ to $I$ iff $t$ is not adjacent to any vertex $\ell \in I$, where $I$ is the current independent set. 

Note that at time $t$, the algorithm makes a decision for vertex $t$ using only the subgraph induced by the vertices $1,2,\dots,t$. In particular, it does not use any information about edges involving the remaining vertices. This captures the \emph{online} aspect: the decision at time~$t$ is based exclusively on the information revealed up to time $t$. For the classical greedy algorithm, the sole source of information is the edges within $\{1,\dots,t\}$, i.e., it has no access to future information. 
%


\paragraph{Future Queries} A natural question is  whether one can relax the online constraint.  What if the algorithm is allowed \emph{``look-ahead"} and can query the status of (some) \emph{future edges} at each step?\footnote{We call an edge as future if it is incident to a vertex not yet queried.} Does it still fail beyond the value $\log_b(n)$, or can it surpass this threshold? Our approach allows us to answer these questions under limited access to future information. Specifically, we consider a generalized class of online algorithms that, at step $t$, makes its decision based on the edges among vertices in $[t]$ together with a controlled set of future-edge queries $S_t$. In particular, we require that the total number of  queried future edges whose both endpoints lie in the final output independent set is at most $c\log^2_b(n)$, for a constant $c>0$. It turns out the $\log^2_b(n)$ scaling is nearly tight, as discussed~below. 

Our main result (Theorem~\ref{thm:MAIN}) shows that for sufficiently small $c$, this broader class of online algorithms also fails at the $\log_b(n)$ threshold: for any $\epsilon>0$, there exists a $c:=c(\epsilon)$ such that the independent set produced by any online algorithm satisfying properties above has size at most ${(1+\epsilon)\log_b(n)}$ whp. 
Note that the case $c=0$ corresponds to classical online algorithms with no look-ahead. Thus, our result identifies $\log_b(n)$ as a sharp computational threshold both for classical online algorithms and for the generalized class with limited look-ahead. Importantly, our result holds regardless of the computational complexity of the algorithm: its power is limited by the information it is allowed to access, rather than by its running time. 

What happens if the allowed future information \emph{budget} is greater, i.e., if $c$ is larger? Somewhat surprisingly, our results are nearly tight in the number of future-edge queries. Specifically, while online algorithms with sufficiently small $c>0$ cannot exceed the $\log_b(n)$ barrier, we do construct an online algorithm which, for every $\epsilon\in (0,1)$,
satisfies the above properties 
with $c=3\epsilon$ and outputs an independent set of size $(1+\epsilon)\log_b(n)$ (whp); see Theorem~\ref{thm:algorithm} for details. Consequently, $\log^2_b(n)$ is the right scaling for a phase transition in terms of future edge inspections: any (online) algorithm with an $O(\log^2_b(n))$ budget fails, whereas there exist algorithms that succeed with an $\Omega(\log^2_b(n))$ budget. In particular, one can surpass the $\log_b(n)$ threshold using limited look-ahead in a controlled way.

\paragraph{Limited Look-Ahead} Beginning with~\cite{feige2020finding}, a line of work~\cite{alweiss2021subgraph,racz2019finding,feige2021tight} has explored algorithms with limited look-ahead for the same problem. In these works, a limited number of edges are queried to guide decision-making; depending on the setting, one can construct independent sets of size $\alpha\log_b(n)$ with $\alpha>1$. A key distinction in our setup is that the decision for vertex~$t$ must be made at step~$t$, i.e., it cannot be deferred. This contrasts with the aforementioned papers, which follow a fundamentally different ``inspect first, decide later" paradigm.

\subsection{Statistical-Computational Gaps}\label{sec:SCGs}
We now review prior work on statistical-computational trade-offs analogous to Karp's problem. 
Random optimization problems --- optimization problems involving randomly generated inputs --- are central to disciplines ranging from computer science and probability to statistics and beyond. Arising from a diverse array of models, many such problems interestingly share a common feature: a \emph{statistical-computational gap}, namely a discrepancy between existential guarantees and what is algorithmically achievable. While the optimal value can often be identified using non-constructive techniques (e.g., the moment method), polynomial-time algorithms typically fall short of finding near-optimal solutions. 

Models exhibiting such gaps 
go beyond the problem considered here and include random constraint satisfaction problems~\cite{mezard2005clustering,achlioptas2006solution,achlioptas2008algorithmic,gamarnik2014limits,bresler2021algorithmic}, optimization over random graphs~\cite{gamarnik2014limits,rahman2017local,gamarnik2020low,wein2020optimal,perkins2024hardness}, binary perceptron and discrepancy minimization~\cite{gamarnik2023algorithmic,gamarnik2022algorithms,gamarnik2023geometric,kizildaug2023planted,li2024discrepancy}, spin glasses~\cite{chen2019suboptimality,gamarnikjagannath2021overlap,huang2021tight,gamarnik2023shattering,kizildaug2023sharp,huang2023algorithmic,sellke2025tight}, and more. For many of these models, all attempts at designing better algorithms have so far been unsuccessful, suggesting inherent algorithmic intractability.

Due to implicit randomness, standard worst-case complexity theory unfortunately offers little insight into such average-case hardness.\footnote{For a few notable exceptions where formal hardness can be established under standard complexity-theoretic assumptions such as $P\ne \#P$, see, e.g.,~\cite{ajtai1996generating,boix2021average,GK-SK-AAP,vafa2025symmetric}.} Nevertheless, several powerful frameworks have emerged to understand such gaps and give evidence of computational hardness  of random optimization problems, as well as other models, notably statistical problems involving planted structures. For a broader perspective on these approaches, see~\cite{wu2018statistical,bandeira2018notes,gamarnik2021overlap,gamarnik2022disordered,gamarnik2025turing}.

\subsection{Algorithmic Lower Bounds for Random Optimization Problems}\label{sec:AlgLBROP}
For random optimization problems, one of the most powerful approaches to establishing algorithmic lower bounds leverages intricate geometry of the optimization landscape. While not explicitly ruling out algorithms, a body of work~\cite{mezard2005clustering,achlioptas2006solution,achlioptas2008algorithmic} uncovered an intriguing connection in certain random computational problems: the onset of algorithmic hardness coincides with phase transitions in geometric features of the landscape. Gamarnik and Sudan~\cite{gamarnik2014limits} later initiated a formal framework known as \emph{Overlap Gap Property (OGP)}, which leverages these geometric properties to rigorously rule out broad classes of algorithms.\footnote{The term \emph{OGP} was formally introduced in~\cite{gamarnik2018finding}.} As we mentioned briefly, their focus was on the problem of finding a large independent set in sparse \ER random graph with average degree $d$. It is known that the largest independent set in this graph has (whp)\,size $2n \log(d)/d$ in the double limit $n\to\infty$ followed by $d\to\infty$~\cite{frieze1990independence,frieze1992independence,bayati2010combinatorial}. In contrast, the best known algorithm finds an independent set of size only $n \log(d)/d$~\cite{lauer2007large}, highlighting a factor-two gap mirroring our setting. Gamarnik and Sudan showed that independent sets of size $(1+1/\sqrt{2})n \log(d)/d$ either overlap substantially or are nearly disjoint, a structural property that leads to the failure of local algorithms and refuting a conjecture of Hatami, Lov{\'a}sz, and Szegedy~\cite{hatami2014limits}. Later work by Rahman and Vir{\'a}g~\cite{rahman2017local} established a sharp lower bound matching the algorithmic $n \log(d)/d$ threshold by considering the overlap pattern among many independent sets (multi-OGP). For the same algorithmic problem, lower bounds against low-degree polynomials (LDP) --- an emerging framework that capture implementations of many practical algorithms --- were first obtained in~\cite{gamarnik2020low} and sharpened in~\cite{wein2020optimal}. Since then, the OGP framework has found broad applicability beyond random graphs, including a variety of other random optimization~problems. 

For certain models, the OGP-based hardness result is supported with a classical average-case hardness guarantee. For example, a remarkable recent result~\cite{vafa2025symmetric} shows that beating the threshold predicted by OGP for the binary perceptron implies a polynomial-time algorithm for a worst-case instance of a variant of the shortest lattice vector problem, widely believed not solvable by polynomial time algorithms.
At the same time, some problems are in fact solvable by polynomial time algorithms beyond the OGP thresholds. In addition to the well known counterexample based on the random XOR-SAT instances which is solvable by Gaussian elimination, a more recent and interesting counterexample of this kind was constructed in \cite{li2024some} which shows that the shortest path problem, which is clearly solvable by polynomial-time algorithms, exhibits OGP in some random graph models. The shortest path problem can be solved by the linear programming technique and it is possible that problems admitting linear programming methods may 
overcome the OGP barrier.
We do not attempt to review the (rather extensive) literature on OGP here, and instead refer the reader to several above mentioned surveys, in particular~\cite{gamarnik2021overlap,gamarnik2022disordered,gamarnik2025turing} and the references therein.

\subsection{Prior Work: OGP as a Barrier to Stable Algorithms}\label{sec:prior}
At a high level, the OGP asserts that near-optimal tuples of solutions exhibiting intermediate levels of overlap do not exist. In the case of pairwise-OGP (as in~\cite{gamarnik2014limits}), this manifests as a dichotomy: any pair of solutions is either very close or very far apart, with no solutions in between. Combined with a judicious interpolation argument, this gap serves as a geometrical barrier for \emph{stable algorithms}. 
Informally, an algorithm $\A$ is stable if, for any inputs $\boldsymbol{A},\boldsymbol{A'}$ with a small $\|\boldsymbol{A}-\boldsymbol{A}'\|$, the corresponding outputs $\A(\boldsymbol{A})$ and $\A(\boldsymbol{A}')$ are close; see, e.g., \cite{gamarnik2022algorithms,gamarnik2023algorithmic} for a formal statement. Stable algorithms are a broad class capturing powerful frameworks such as local algorithms (factors of i.i.d.)~\cite{gamarnik2014limits,rahman2017local}, approximate message passing~\cite{gamarnikjagannath2021overlap}, low-degree polynomials~\cite{gamarnik2020low,bresler2021algorithmic} and low-depth Boolean circuits~\cite{gamarnik2021circuit}, among many. Crucially, when the solution space exhibits the OGP, these algorithms, by virtue of their smooth evolution, are unable to traverse the overlap gap.

\paragraph{Unstability of the Greedy Algorithm} The success of the OGP-based arguments crucially relies on stability, in particular the fact that stable algorithms cannot `jump across' regions of forbidden overlap. As such,  much of the OGP-based hardness results in the literature rule out stable algorithms modulo a few notable exceptions (see below). 

However, as we demonstrate next, the classical greedy algorithm~\cite{grimmett1975colouring,karp1976probabilistic} for finding an independent set of size approximately $\log_b(n)$ is unstable. Consequently, it is unclear how to establish algorithmic lower bounds in this case (e.g., via the OGP).
Given a graph $G=(V,E)$ with vertex set $V=\{1,\dots,n\}=:[n]$ and edge set~$E \subseteq \binom{V}{2}$, we describe the greedy algorithm following~\cite{grimmett1975colouring}:
\begin{algorithm}
\caption{Greedy Algorithm for Independent Sets}\label{Alg:Greedy}  
\begin{algorithmic}
\STATE \textbf{Input}: A graph $G=(V,E)$ with $V=\{1,\dots,n\}$.
\STATE \textbf{Initialize}: $I=\varnothing$
\STATE \textbf{for} $t=1,\dots,n$ {\bf do} \\
\quad Set $I\leftarrow I\cup \{t\}$ if  vertex~$t$ is in~$G$ not adjacent to any vertex~$\ell \in I$  
\STATE \textbf{end}
\STATE \textbf{Output}: $I$
\end{algorithmic}
\end{algorithm}\\
\noindent
As first shown by Grimmett and McDiarmid~\cite{grimmett1975colouring}, for any constant edge-probability~$p$ and any~${\eps>0}$, 
the above-described Algorithm~\ref{Alg:Greedy} run on the random graph $G\sim \G(n,p)$ returns an independent set of size at least~${(1-\eps)\log_b(n)}$~whp. 
We now record that Algorithm~\ref{Alg:Greedy} is not stable. 
\begin{proposition}[Informal]\label{Prop:Unstable}
Fix any edge-probability~$p \in (0,1)$. 
Then the classical greedy algorithm for independent sets (Algorithm~\ref{Alg:Greedy}) is unstable on~${G\sim \G(n,p)}$.
\end{proposition}
\noindent
We defer the proof sketch to Appendix~\ref{apx:Prop:Unstable}, since it is rather tangential to the main arguments of this~paper.

\subsection{Lower Bounds for Online Algorithms}
Despite being unstable, Algorithm~\ref{Alg:Greedy} exhibits an \emph{online} feature mentioned earlier: the decision at round~$t$ (whether $t\in I$) is based exclusively on the edges among the first $t$ vertices. This prompts the question: 
\begin{center}
   \emph{What are the fundamental limits of online algorithms in our context? }
\end{center}
As we noted above, while the OGP has been successfully used to obstruct stable algorithms for numerous random optimization problems, computational barriers beyond this class remain quite scarce. The idea of using the OGP to rule out online algorithms was first introduced in~\cite{gamarnik2023geometric} for discrepancy minimization, where the best known algorithm (by Bansal and Spencer~\cite{bansalspenceronline}) is online. A similar OGP-based obstruction to online algorithms was recently obtained for the graph alignment problem~\cite{du2025algorithmic}. Importantly though, the arguments of these papers do not extend to dense random graphs.\footnote{For a more restricted class of online algorithms (that is a strict subset of $0$-restricted online algorithms in the sense of Definition~\ref{def:Operation} and Condition~\eqref{eq:CondInformal}, but includes the classical online greedy algorithm), 
for constant edge-probabilities~${p \in (0,1)}$ variants of the arguments from~\cite{gamarnik2023geometric} appear to yield hardness of finding independent sets of size $(1+\eps)\log_b(n)$ only when ${\eps \ge \eps_0}:=\sqrt{5}-2\approx 0.236$, falling short of achieving the desired hardness for all~$\eps>0$. } 
Furthermore, both of these works consider the classical online setting, where the algorithm has no access to future information. 
This prompts the question: 
\begin{center}
   \emph{Is there an additional benefit to allowing even limited look-ahead?}
\end{center}

Our work addresses the two above-mentioned questions, and differs from~\cite{gamarnik2023geometric,du2025algorithmic} both conceptually and technically. Conceptually, we consider a richer class of online algorithms that can access limited future information. As discussed in Section~\ref{sec:InformalOnline}, while online algorithms without look-ahead fail to find independent sets of size $(1+\epsilon)\log_b(n)$ for any $\epsilon>0$ (Theorem~\ref{thm:MAIN}), those permitted a limited number of look-ahead queries can surpass this threshold (Theorem~\ref{thm:algorithm}). This highlights the added algorithmic power conferred by look-ahead access. Further, our arguments successfully capture that a phase transition, in terms of permitted amount of look-ahead, occurs precisely at scaling $\log_b^2(n)$.

\paragraph{Proof Techniques} Technically, our arguments for ruling out online algorithms uses several novel ingredients. 
In particular, we (i) construct temporally evolving interpolation paths tailored to the online setting, 
and (ii) define a stopping time that tracks the size of the algorithm’s output at each step. In particular, while virtually all prior OGP-based lower bound arguments are `algorithm-oblivious', the interpolation paths we design evolve with the algorithm, a salient feature that is necessary to address the online setting. We highlight that these ingredients are absent from previous OGP-based lower bound arguments; moreover, they appear necessary even for the classical online setting with no look-ahead. That is, even one is only interested in ruling out classical online algorithms (no look-ahead) at the threshold $\log_b(n)$, our arguments are necessary. Interestingly, these technical ideas also allow us to establish \emph{strong hardness}, ruling out even the algorithms succeeding with probability only $o(1)$. These collectively highlight the added technical power of these new ideas, which have already proven useful beyond Karp's problem and yielded sharp lower bounds for dense bipartite and hypergraphs~\cite{dhawan2025sharp,hypergraph-wip}. As such, our arguments are likely useful in analyzing online algorithms in other high-dimensional problems as well. For a high-level overview of these new technical~ingredients, see Section~\ref{sec:TechOverview}.


\subsection{Summary of Contributions}
Suppose that the vertices are probed in the order $v_1,\dots,v_t$. In the classical version, an algorithm is said to be \emph{online} if the decision for $v_t$ is based solely on the information revealed up to round $t$—the edges among the first $t$ vertices. 

Our focus is on a generalized class of online algorithms, which encompasses the classical greedy algorithm~\cite{grimmett1975colouring,karp1976probabilistic} as a special case. At each step $t$, the algorithm still inspects the connectivity among the first $t$ vertices, and additionally, a set of vertex-pairs $S_t$ which may include vertices not yet queried (future vertices). The decision for $v_t$ is then based on two sources:

At each step $t$, the algorithm may inspect not only the connectivity among the first $t$ vertices, but also query an additionally set of vertex-pairs $S_t$, which may include pairs involving future vertices. The decision for $v_t$ is then based on two sources:  (a) the subgraph induced by the vertices $\{v_1,\dots,v_t\}$ probed so far and (b) the edges among the pairs $E_t = S_1\cup\cdots \cup S_t$. For a formal description, see Definition~\ref{def:Operation}. 

\paragraph{Budget Constraint} As mentioned earlier, we consider online algorithms that access future information in a controlled way. Specifically, we say that an algorithm $\A$ is \emph{$c$-restricted} if 
\begin{equation}\label{eq:CondInformal}
    \left|E_n\cap \binom{\A(G)}{2}\right|\le c\log_b^2(np),
\end{equation}
where $\A(G)$ is the independent set constructed by $\A$ on graph $G$, and $\binom{\A(G)}{2}$ is the set of all vertex-pairs $\{i,j\}$ with~${i,j\in \A(G)}$. 
Intuitively speaking, Condition~\eqref{eq:CondInformal} only limits the number of future edge `look-ahead' queries among vertex-pairs that end up in the independent set constructed by the algorithm~$\A$
(this does not limit the total number of `look-ahead' queries, i.e., the algorithm is allowed to query many other future edges). 
%
Furthermore, Algorithm~\ref{Alg:Greedy} uses no look-ahead and thus satisfies~\eqref{eq:CondInformal} with~$c=0$. 

\paragraph{Sharpness of Computational Threshold} Complementing Algorithm~\ref{Alg:Greedy}, our first main result establishes a sharp algorithmic lower bound against $c$-restricted online algorithms, for sufficiently small~$c$.
\begin{theorem}[Informal version of Theorem~\ref{thm:MAIN}]\label{thm:Informal-Main} 
For any $\epsilon>0$ and sufficiently small~$c>0$, 
no $c$-restricted online algorithm finds an independent set in~$\mathbb{G}(n,p)$ of size at least~${(1+\epsilon)\log_b(n)}$~whp.
\end{theorem}
Together with the discussed guarantees for the greedy algorithm by Grimmett and McDiarmid~\cite{grimmett1975colouring} (see Section~\ref{sec:AlgLBROP}), 
Theorem~\ref{thm:Informal-Main} implies the following:
\begin{coro}\label{coro:nice}
The classical greedy algorithm for independent sets (Algorithm~\ref{Alg:Greedy}) is asymptotically optimal within the class of classical online algorithms with no look-ahead.
\end{coro}
Ever since Karp's seminal 1976 paper~\cite{karp1976probabilistic}, 
the greedy algorithm has remained the asymptotically best polynomial-time algorithm for finding independent sets in $\G(n,p)$. 
With this context in mind, Theorem~\ref{thm:Informal-Main} and Corollary~\ref{coro:nice} therefore provide rigorous evidence for the belief that $\log_b(np)$ might be the \emph{algorithmic threshold} for the problem of finding large independent sets in $\G(n,p)$. 

\paragraph{Sharpness of Budget Constraint} Our next main result concerns the sharpness of the bound~\eqref{eq:CondInformal} on the future edge queries, 
which we show to tight up to constants for dense random graphs (with constant edge-probability~$p$). 
\begin{theorem}[Informal version of  Theorem~\ref{thm:algorithm}]\label{thm:ALG-Informal}
For any $\epsilon \in (0,1)$, there exists a $3\epsilon$-restricted online algorithm that finds an independent set in~$\mathbb{G}(n,p)$ of size at least~${(1+\epsilon)\log_b(n)}$~whp.
\end{theorem}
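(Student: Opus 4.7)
The plan is to build a two-phase online algorithm whose second phase pre-plans all remaining output by issuing a single batch of exceptional queries at a carefully chosen moment. Fix a slack parameter $\delta\in(0,\epsilon]$ (say $\delta=\epsilon/10$) and set $t^\star=\lfloor n^{1-\epsilon-\delta}\rfloor$. Phase~1 ($t\le t^\star$) runs Algorithm~\ref{Alg:Greedy} on $\{v_1,\ldots,v_{t^\star}\}$ with no exceptional queries, producing an independent set $I_0$ with $|I_0|=(1-\epsilon-\delta)(1+o(1))\log_b n$ w.h.p.\ by the Grimmett--McDiarmid analysis~\cite{grimmett1975colouring}. At time $t^\star$ the algorithm issues a single exceptional batch $S_{t^\star}$ consisting of \textbf{(A)} every pair $\{v_s,u\}$ with $s>t^\star$ and $u\in I_0$, and \textbf{(B)} every pair $\{v_s,v_{s'}\}$ with $t^\star<s<s'\le n$. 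From (A) it identifies $R:=\{s>t^\star:v_s\text{ has no edge to }I_0\}$, sets $k_2:=\lceil(1+\epsilon)\log_b n\rceil-|I_0|$, and from (B) deterministically picks (e.g.\ lexicographically) an independent set $F\subseteq R$ with $|F|=k_2$ when one exists. Phase~2 ($t>t^\star$) simply outputs $v_t$ iff $v_t\in F$, making no further queries. The resulting algorithm is clearly online and deterministic.

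For the \emph{size claim}, conditional on $I_0$ the set $R$ is $\mathrm{Binomial}(n-t^\star,(1-p)^{|I_0|})$ with mean of order $n^{\epsilon+\delta}$, so $|R|=n^{\epsilon+\delta+o(1)}$ w.h.p.\ by Chernoff. The crucial structural point is that the edges among $\{v_s:s>t^\star\}$ are independent of every piece of randomness used to define $I_0$ and $R$; hence conditionally on $(I_0,R)$, the subgraph on $R$ is distributed exactly as $\G(|R|,p)$. The classical second-moment lower bound on its independence number (see~\cite{matula1970complete,matula1976largest}) then produces an independent set in $R$ of size at least $(2-o(1))\log_b|R|=(2\epsilon+2\delta)(1+o(1))\log_b n$, comfortably exceeding $k_2\le(2\epsilon+\delta+o(1))\log_b n$ once $n$ is large. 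Hence $F$ exists w.h.p., so the output has size $|I_0|+k_2\ge(1+\epsilon)\log_b n$.

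For the \emph{budget claim}, partition $\binom{\A(G)}{2}=\binom{I_0}{2}\sqcup(I_0\times F)\sqcup\binom{F}{2}$. Pairs inside $I_0$ are never placed in any $S_t$, so they do not enter $E_n$ and contribute $0$. Pairs in $I_0\times F$ lie in batch (A) and contribute $|I_0|\cdot k_2$; pairs inside $F$ lie in batch (B) and contribute $\binom{k_2}{2}$. Writing $a:=|I_0|/\log_b n$ and using $|I_0|+k_2=(1+\epsilon)\log_b n+O(1)$, an elementary calculation gives
\[
|I_0|\,k_2+\binom{k_2}{2}=\tfrac12\bigl((1+\epsilon)^2-a^2\bigr)\log_b^2 n\cdot(1+o(1)).
\]
Substituting $a=(1-\epsilon-\delta)(1+o(1))$ produces the leading coefficient $(2\epsilon+\delta)(1-\delta/2)$, which is at most $3\epsilon$ for every $\delta\in(0,\epsilon]$. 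Hence $|E_n\cap\binom{\A(G)}{2}|\le 3\epsilon\log_b^2 n$ for large $n$, as required.

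The main obstacle is the tight calibration of $\delta$: it must be small enough that the budget coefficient $(2\epsilon+\delta)(1-\delta/2)$ stays below $3\epsilon$, yet large enough that the concentration of $|I_0|$ and $|R|$ and the existence of $F\subseteq R$ of the required size all absorb the unavoidable $o(1)$ error terms. Any $\delta\in(0,\epsilon]$ with a little slack (e.g.\ $\delta=\epsilon/10$) accomplishes both simultaneously.
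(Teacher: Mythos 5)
Your three-phase design (greedy on a prefix, exceptional queries to locate a residual set $R$, brute-force search inside $R$) is essentially the same as the paper's, and the size claim and the algebra of the budget calculation are both correct on the high-probability event. However, there is a genuine gap in the budget argument. Definition~\ref{def:ADM-ALG} is a \emph{deterministic} requirement: $\bigl|\binom{\A_n(G)}{2}\cap E_n\bigr|\le c\log_b^2(np)$ must hold for every graph $G$, not just w.h.p.\ (indeed this is how $c$-goodness is used in Lemma~\ref{lem:inclusion}, with no probability qualifier, and the paper's proof explicitly verifies the bound ``deterministically''). Your bound $|I_0|\,k_2+\binom{k_2}{2}=\tfrac12\bigl((1+\epsilon)^2-a^2\bigr)\log_b^2 n\cdot(1+o(1))$ is monotone decreasing in $a=|I_0|/\log_b n$, so if the greedy prefix happens to return an atypically small $I_0$ (which occurs with vanishing but nonzero probability), then $k_2$ is atypically large; since $R$ is then also larger (roughly $n^{1-a}$ survivors), a set $F\subseteq R$ of size $k_2$ still exists, and your algorithm would happily produce it, pushing the budget as high as $\tfrac12(1+\epsilon)^2\log_b^2 n$ when $a\to 0$. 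Thus the $3\epsilon$-good property fails on a positive-probability set of inputs, and the algorithm as written is not $3\epsilon$-good.

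The paper sidesteps this by imposing two hard caps that hold on \emph{every} input: it stops adding to $I_T$ once $|I_{t-1}|\ge\ell$, and it truncates the brute-force set at $|J|\le\lfloor 2\log_b(r)\rfloor$, so the product $|J|\cdot|I_T|+\binom{|J|}{2}$ is bounded pointwise and the $3\epsilon$ bound follows for all $G$. Your construction is fixable in the same spirit: cap $|I_0|$ at $\lfloor(1-\epsilon-\delta)\log_b n\rfloor$ during the greedy phase, and correspondingly cap $|F|$ at $\lceil(1+\epsilon)\log_b n\rceil-\lfloor(1-\epsilon-\delta)\log_b n\rfloor$ (rather than defining $k_2$ in terms of the realized $|I_0|$). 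After that change the remainder of your analysis goes through, and it is a clean variant of the paper's proof. A smaller, cosmetic difference: you reveal \emph{all} $\binom{n-t^\star}{2}$ pairs among the future vertices, whereas the paper only reveals the $\binom{r}{2}$ pairs inside the truncated $R$; this is harmless under Definition~\ref{def:ADM-ALG} because only pairs inside $\binom{\A(G)}{2}$ are charged, but it does mean your $|E_n|$ is vastly larger than the paper's.
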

In words, Theorem~\ref{thm:ALG-Informal} shows that the scaling $O(\log_b^2(np))$ is optimal up to constants. Of course, while the algorithm in Theorem~\ref{thm:ALG-Informal} is online, it runs in superpolynomial time (see Section~\ref{sec:sharpness}). We also emphasize that for any $\epsilon>0$, it is easy to construct a $3$-restricted algorithm --- e.g., by brute-force searching for a large independent set and then adding its vertices sequentially. Theorem~\ref{thm:ALG-Informal} strengthens this observation by showing that for constant edge-probabilities~$p$, one can in fact take $c\to 0$ as $\epsilon\to 0$.

\paragraph{Online Models} We emphasize that the online setting is far more than being a single algorithmic approach to solving a problem. It represents a fundamental computational model for decision-making under uncertainty, extensively studied in machine learning~\cite{rakhlin2010online,rakhlin2011online-a,rakhlin2011online-b,rakhlin2013online}, convex optimization~\cite{hazan2016introduction}, and beyond. As mentioned, the best known algorithms for optimization over dense random graphs are online. Furthermore, for several important models --- e.g., random k-SAT and combinatorial discrepancy theory~\cite{bansalspenceronline} --- the best known algorithms are either online or greedy, therefore naturally admitting online implementations. Given the central role that the online setting plays in modern data science, it is essential to understand its fundamental limitations. Our results contribute to this by extending the existing toolkit, with techniques that are potentially applicable to a broader range of problems.

\section{Setup and Main Results}

\subsection{Online Algorithms with Look-Ahead: Formal Definition} 
We focus on a generalized class of online algorithms that (i) access limited future information at each step, and (ii) may also use additional randomness—beyond the input graph $G$—through an independent seed random variable $\omega\in\Omega$.

%

\begin{definition}\label{def:Operation}
    Let $G\sim \G(n,p)$ have vertex set $[n]:=\{1,\dots,n\}$. A randomized algorithm $\A$, with internal randomness determined by the independent seed $\omega \in \Omega$, runs for $n$ rounds and keeps track of sets $V_t\subseteq [n]$, $\A_t(G) \subseteq V_{t}$ and $E_t\subseteq \binom{[n]}{2}$ (initially $V_0=\A_0(G)=E_0=\varnothing$). In each round~$t\in[n]$:
    \begin{enumerate}
        \item Based on $\omega$ and all information revealed so far, 
				$\A$ randomly selects a vertex $v_t\in[n]\setminus V_{t-1}$,  
				and reveals the edge-status  of all so-far unrevealed vertex-pairs~$\{v_t,v\}$ with~$v \in V_{t-1}$. 
        \item Based on $\omega$ and all information revealed so far,
				$\A$ selects a set $S_t\subseteq \binom{[n]}{2}$ of vertex pairs, 
				and reveals the edge-status of all so-far unrevealed vertex-pairs $\{u,v\}\in S_t$. 
        \item  Based on $\omega$ and all information revealed so far, 
				$\cA$ decides whether $v_t$ is added to its independent set or not, i.e., whether~${\A_t(G)= \A_{t-1}(G) \cup \{v_t\}}$ or~${\A_t(G)= \A_{t-1}(G)}$ holds (while ensuring that $\A_t(G)$ is an independent set), and also updates $V_t=V_{t-1}\cup \{v_t\}$ and $E_t = E_{t-1} \cup S_t$. 
    \end{enumerate}
At the end, we write $\A(G) = \A(G,\omega):= \A_n(G)$ for the final independent set constructed by the algorithm~$\cA$ (to avoid clutter, we usually omit the dependence on the seed~$\omega \in \Omega$ in our notation).
\end{definition}
Several remarks are in order. The seed $\omega\in \Omega$ captures the internal randomness of $\A$ and affects the information revealed at any round (as mentioned above, for brevity we often often omit the dependence on $\omega$ in our notation). 
Note that for each $t\in[n]$, the partial output $\A_t(G)$ at round $t$ is an independent set by construction. 
The set $V_t$ consists of all vertices that are inspected by $\A$ in the first $t$ rounds. 
The sets $S_t$ are allowed to contain vertex-pairs $\{i,j\}$ with at least one endpoint in $\{v_{t+1},\dots,v_n\}$, i.e., among the vertices not yet explored. 
For this reason, we refer to edges with endpoints in $S_t$ as \emph{future edges}. 
The set $E_t=S_1\cup \cdots \cup S_t\subseteq\binom{[n]}{2}$ consists of all such future edges revealed until the end of round~$t$. 

\subsection{Main Results}
Our focus is on the $c$-restricted online algorithms, i.e., those that simultaneously satisfy Definition~\ref{def:Operation} and Condition~\eqref{eq:CondInformal}. The latter conditions asserts that at most $c\log_b^2(np)$ inspected future edges are contained in the final output generated by the algorithm.




Our first main result shows that for small enough $c$, no $c$-restricted online algorithm can find an independent set of size $(1+\epsilon)\log_b(np)$ with non-negligible probability. Formally:
\begin{theorem}\label{thm:MAIN}
For any~$\epsilon>0$, there exist constants $c,\xi>0$ and $d,n_0 >1$ (which may depend on~$\epsilon)$ 
such that, for all $n \ge n_0$, any $c$-restricted online algorithm $\A$ satisfies
\begin{equation}\label{eq:thm:MAIN}
\mathbb{P}_{\displaystyle G\sim \mathbb{G}(n,p),\omega \in \Omega}\Bigl[\bigl|\A(G,\omega)\bigr|\ge (1+\epsilon)\log_b(np)\Bigr] \le (np)^{-\xi \log_b(np)}
\end{equation}
for all edge-probabilities $p=p(n)$ satisfying $d/n\le p \le 1-n^{-1/d}$. 
\end{theorem}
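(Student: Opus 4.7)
The plan is to rule out $c$-good algorithms via a multi-OGP argument, combined with a carefully designed edge-by-edge interpolation and a stopping-time analysis that handles the instability of online algorithms highlighted by Proposition~\ref{Prop:Unstable}.

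First, I would establish a multi-OGP: with probability at least $1-(np)^{-\omega(\log_b(np))}$, the graph $G\sim\G(n,p)$ contains no tuple $(I_0,I_1,\ldots,I_T)$ of independent sets, each of size $k:=(1+\epsilon)\log_b(np)$, for which consecutive sets $I_{j-1},I_j$ differ in only a bounded number of vertices while the overlap profile $|I_0\cap I_j|$ is forced to pass through a ``forbidden'' intermediate band, strictly between the typical overlap $\Theta(k^2/n)$ of two independent sets and the trivial maximum $k$. The key computation is a first-moment union bound over the $n^{O(Tk)}$ such tuples, weighted by the probability $(1-p)^{\Theta(T k^2)}$ that all the required non-edges are present. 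Tuning $T$ and the forbidden band so that each incremental step is forced to cross it, the first moment becomes $(np)^{-2\xi\log_b(np)}$ for a suitable $\xi>0$, uniformly across $p\in[d/n,1-n^{-1/d}]$ for $d$ large.

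Second, I would run $\A$ along an interpolating family $G^{(0)}=G,G^{(1)},\ldots,G^{(M)}=G'$ between two independent copies of $\G(n,p)$, where consecutive $G^{(\ell)}$ differ only in the status of a single vertex pair $\{u_\ell,v_\ell\}$, with the resampling order chosen to respect the exploration order of $\A$ (roughly, edges touching earlier-probed vertices first). Instability prevents the naive claim that $\A(G^{(\ell-1)})$ and $\A(G^{(\ell)})$ are close. Instead, for each $\ell$ I introduce a stopping time $\tau_\ell$, the first round at which the two executions diverge, and partition steps according to whether $\tau_\ell$ occurs before or after $\max\{u_\ell,v_\ell\}$ is probed. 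A step of the first type forces the resampled edge $\{u_\ell,v_\ell\}$ to have been queried earlier via some exceptional set $S_t$ that contributes to the final output, and the $c$-good condition $|E_n\cap\binom{\A(G)}{2}|\le c\log_b^2(np)$ caps the number of such ``early'' divergences by $O(c\log_b^2(np))$. Steps of the second type are controlled directly, since the resample only affects decisions made after both endpoints are probed, and the algorithm's independent set can change by at most a constant number of vertices per step. By picking $T$ equally-spaced checkpoints $0=\ell_0<\ell_1<\cdots<\ell_T=M$ along (possibly several) interpolation paths, the sequence $\A(G^{(\ell_0)}),\A(G^{(\ell_1)}),\ldots,\A(G^{(\ell_T)})$ has consecutive elements at controlled symmetric difference while spanning the whole range of overlaps with $\A(G^{(0)})$.

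Assembling these tools, I would suppose for contradiction that $\Pr[|\A(G)|\ge k]>(np)^{-\xi\log_b(np)}$. Since $G$ and $G'$ are independent, the event that $\A$ succeeds on \emph{both} endpoints has probability exceeding $(np)^{-2\xi\log_b(np)}$, which by the multi-OGP estimate strictly exceeds the probability that the graph along the entire interpolation admits an OGP-violating tuple. Hence with positive probability there is an instance in which both $\A(G)$ and $\A(G')$ output independent sets of size $k$, the overlap traverses from $k$ at $\ell_0$ down to the typical value at $\ell_T$, and the checkpoint tuple has consecutive distances within the stopping-time bound: some checkpoint pair must then land in the forbidden band, contradicting the multi-OGP. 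Calibrating $T$, the width of the forbidden band, and $c=c(\epsilon)$ so that the stopping-time jumps are strictly smaller than the band width yields the exponent $\xi=\xi(\epsilon)>0$ in~\eqref{eq:thm:MAIN}.

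The main obstacle is the second step: taming the instability of online algorithms without losing quantitative control. Classical OGP arguments for stable algorithms rely on a Lipschitz-type property of $\A$ to guarantee that consecutive interpolated outputs are close, but no such property holds here, as witnessed by Proposition~\ref{Prop:Unstable}. The novel ingredient is the pairing of stopping times with the $c$-good budget: although $\A$ can react wildly to a future-edge resample, it can only do so at rounds where the corresponding exceptional edge actually ends up inside $\binom{\A(G)}{2}$, and the $c$-good condition directly bounds the number of such rounds. Making this trade-off tight enough that the interpolation still crosses the forbidden band with room to spare will likely require multiple, carefully paired interpolation paths rather than a single monotone one, together with a delicate choice of the resampling order that keeps the exceptional-edge interactions between $\A(G^{(\ell-1)})$ and $\A(G^{(\ell)})$ amortized over the full length of the interpolation.
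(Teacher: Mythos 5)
Your proposal and the paper's proof share the high-level ingredients---a multi-solution overlap structure, interpolation between correlated graphs, a stopping time, and a budget argument using $c$-goodness---but the concrete realizations diverge in a way that matters, and one key step in your plan does not survive scrutiny.

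The critical gap is your claim that ``Steps of the second type are controlled directly, since the resample only affects decisions made after both endpoints are probed, and the algorithm's independent set can change by at most a constant number of vertices per step.'' This is false, and it is precisely the instability phenomenon that Proposition~\ref{Prop:Unstable} illustrates: flipping a single \emph{ordinary} edge $\{v_\ell,v_t\}$ that the algorithm reveals in step (a) of Definition~\ref{def:Operation} can change the decision at round $t$, and every subsequent decision can then cascade, so the final output can change by $\Theta(\log_b(np))$ vertices after a single resample. The $c$-good budget gives you no handle here: Definition~\ref{def:ADM-ALG} bounds only $|E_n \cap \binom{\A(G)}{2}|$, i.e.\ \emph{exceptional} edges inside the output, not the $\binom{|V_T|}{2}$ ordinary edges that drive the greedy decisions and whose resampling can propagate unbounded changes. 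Consequently, with an edge-by-edge interpolation you have no control on the symmetric difference between $\A(G^{(\ell-1)})$ and $\A(G^{(\ell)})$ at second-type steps, and the checkpoint sequence need not traverse a forbidden overlap band slowly; the argument that ``some checkpoint pair must land in the forbidden band'' cannot be made to go through in this form.

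The paper sidesteps the issue entirely by never resampling an edge that the algorithm has already observed. Instead of a fine-grained edge-by-edge interpolation from $G$ to $G'$, one fixes a stopping time $\tau$ defined purely in terms of the algorithm's progress---the first round at which the greedily-built prefix $\A_t(G_1)$ reaches size $N=\lrceil{(1-\eps/2)\log_b(np)}$---and then spawns $m$ graphs $G_1^{(\tau)},\dots,G_m^{(\tau)}$ that \emph{agree exactly} with $G_1$ on every pair revealed during the first $\tau$ rounds (both ordinary and exceptional), while all unrevealed pairs are resampled independently. By construction the first $\tau$ rounds of $\A$ behave identically on all $m$ graphs, so the outputs are forced to share a common size-$N$ prefix on $V_\tau$; this replaces your forbidden-overlap-band structure. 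The conditional independence after $\tau$, together with Jensen's inequality applied to $x\mapsto x^m$, gives $\Pr[\cS]\ge\Pr[\cE]^m$, which is the lower bound. The upper bound on $\Pr[\cS]$ is then a first-moment union bound over $m$-tuples of independent sets sharing a size-$N$ prefix on $V_T$ with bounded exceptional intersection, which is tractable precisely because no stability of $\A$ under edge flips is ever invoked. So while your proposal correctly identifies that a stopping time and the $c$-good budget must interact to tame instability, it routes this through an interpolation whose per-step stability cannot be established; the actual argument replaces per-step stability with exact agreement on the revealed data up to $\tau$ and conditional independence afterward.
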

Our proof of Theorem~\ref{thm:MAIN} develops a new toolkit based on the Overlap Gap Property. See Section~\ref{sec:TechOverview} for a technical overview, and Sections~\ref{sec:proof:main}-\ref{pf:MAIN-OGP} for the full proof. Our arguments show that it suffices to take
\[
c=\min\{\epsilon^2,1\}/8\qquad\text{and}\qquad \xi = \min\{\epsilon^2,1\}/64.
\]
\noindent Several remarks are in order. Firstly, Theorem~\ref{thm:MAIN} establishes that online algorithms cannot surpass the  threshold $\log_b(np)$. Combined with Conjecture~\ref{conj-main}, our lower bound therefore gives strong evidence towards the conjecture that the location of the computational threshold is at $\log_b(np)$.

Secondly, note that the probability bound~\eqref{eq:thm:MAIN} always satisfies
\[
(np)^{-\xi \log_b (np)} \le n^{-\Omega(1)}\to 0 \qquad \text{ as $n\to\infty$,} 
\]
since $\log_b(np) = \Omega(\log(d)/p)$, see Lemma~\ref{lemma:Logb-LowerBd} for details. 
Thus, Theorem~\ref{thm:MAIN} rules out algorithm succeeding even with vanishing probability, thereby establishing what is known as \emph{strong hardness}. See~\cite{huang2025strong} for strong hardness results regarding low-degree polynomials in other random computational~problems.

Importantly, the probability bound~\eqref{eq:thm:MAIN} is optimal, up to the numerical value of the constant~$\xi$ in the exponent: 
indeed, the deterministic $0$-restricted algorithm which sequentially inspects the the first ${s:=\ceil{(1+\epsilon)\log_b(np)}}$ vertices of  $\mathbb{G}(n,p)$ finds an independent set of size~$s$ with probability \[
\displaystyle (1-p)^{\binom{s}{2}} = (np)^{-\Theta\bigl((1+\epsilon)^2\bigr) \cdot  \log_b(np)}.
\]

\subsection{Sharpness of the Budget Constraint}\label{sec:sharpness}
Specializing on the dense regime where $p$ is constant, we now state our next main result regarding the sharpness of our bound on $c$-restrictedness in~\eqref{eq:CondInformal}. 

%
%
\begin{theorem}\label{thm:algorithm}
Fix any edge-probability~$p \in (0,1)$. Then, for any~$\epsilon \in (0,1)$, there exists a deterministic $3\eps$-restricted online algorithm~$\A$ 
 that satisfies, as~$n \to \infty$, 
\begin{equation}\label{eq:thm:algorithm}
\mathbb{P}_{\displaystyle G\sim \mathbb{G}(n,p)}\Bigl[\bigl|\A(G)\bigr|\ge (1+\epsilon)\log_b(n)\Bigr]=1-o(1). 
\end{equation}
\end{theorem}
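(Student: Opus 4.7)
The plan is a deterministic two-phase online algorithm $\A$ combining a Karp-style greedy prefix (which posts no exceptional queries) with a single brute-force extension step at the boundary. Write $L:=\log_b n$ and $n_1:=\lfloor n^{1-2\epsilon}\rfloor$; I treat the case $\epsilon<1/2$, since for $\epsilon\ge 2-\sqrt{3}\approx 0.27$ a pure brute-force variant (reveal all pairs in $S_1$, then output a maximum IS of size $\lceil(1+\epsilon)L\rceil$) is already $3\epsilon$-good because $(1+\epsilon)^2/2\le 3\epsilon$ there. In Phase~1, for rounds $t=1,\ldots,n_1$, probe $v_t=t$ with $S_t=\emptyset$ and add $t$ to the IS iff $t$ has no edge to the current IS; by the classical Grimmett--McDiarmid/Karp concentration, the resulting $I_1$ satisfies $|I_1|=(1-2\epsilon+o(1))L$ w.h.p. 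At round $n_1+1$, probe $v_{n_1+1}=n_1+1$ and take $S_{n_1+1}$ to be the union of $\binom{\{n_1+1,\ldots,n\}}{2}$ together with all pairs $\{u,v\}$ with $u\in I_1$ and $v\in\{n_1+2,\ldots,n\}$; this single batch reveals both the Phase-2 subgraph and its full adjacency to $I_1$. Compute $C:=\{v\in\{n_1+1,\ldots,n\}:e_G(\{u,v\})=0\ \forall u\in I_1\}$ and let $E\subseteq C$ be the lexicographically smallest IS in $G[C]$ of size $\lceil(1+\epsilon)L\rceil-|I_1|$. In Phase~2, for $t=n_1+1,\ldots,n$, include $t$ iff $t\in E$ (with $S_t=\emptyset$ for $t>n_1+1$), and output $\A(G)=I_1\cup E$.

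For the size lower bound, condition on Phase~1: the vertices of $\{n_1+1,\ldots,n\}$ each lie in $C$ independently with probability $(1-p)^{|I_1|}=n^{-(1-2\epsilon)+o(1)}$, so $|C|$ concentrates around $n^{2\epsilon+o(1)}$, and the induced subgraph $G[C]$ has the law of $G(|C|,p)$. The standard first/second-moment bound on $\alpha(G(m,p))$ then gives $\alpha(G[C])=(2+o(1))\log_b|C|=(4\epsilon+o(1))L$ w.h.p., comfortably above the required $(3\epsilon+o(1))L$; hence $|\A(G)|=\lceil(1+\epsilon)L\rceil$ w.h.p.

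For $3\epsilon$-goodness, the pairs in $\binom{\A(G)}{2}\cap E_n$ split into three types. Pairs inside $\binom{I_1}{2}$ contribute $0$, because Phase~1 posts no exceptional queries and $I_1\subseteq\{1,\ldots,n_1\}$. Pairs inside $\binom{E}{2}$ contribute at most $\binom{|E|}{2}$, all drawn from $S_{n_1+1}$. Pairs in $I_1\times E$ contribute at most $|I_1|\cdot|E|$; crucially, the pairs $\{u,n_1+1\}$ with $u\in I_1$ are revealed in step~(a) of round $n_1+1$ and are deliberately \emph{not} placed into $S_{n_1+1}$, so they do not enter $E_n$. Summing and substituting the concentrations yields
\[
\binom{|E|}{2}+|I_1|\cdot|E|=\frac{|E|(|E|+2|I_1|-1)}{2}=\frac{3\epsilon(2-\epsilon)}{2}L^2+o(L^2)=\Bigl(3\epsilon-\frac{3\epsilon^2}{2}\Bigr)L^2+o(L^2),
\]
which is at most $3\epsilon L^2$ for all sufficiently large $n$.

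The main obstacle is the tight choice $n_1=n^{1-2\epsilon}$: decreasing the exponent enlarges $|E|$ and pushes the $\binom{|E|}{2}$ contribution past the $3\epsilon$ budget, while increasing it shrinks $|C|$ below the size needed to house an IS of $(3\epsilon+o(1))L$. Making the sketch rigorous therefore requires both Karp's $o(L)$-concentration for $|I_1|$ and a careful second-moment existence argument for an IS of size $(3\epsilon+o(1))L$ in the conditionally-random graph $G[C]$; controlling the joint randomness of $|C|$ and $G[C]$ is the main technical step. Determinism comes for free: every choice made by $\A$ (the vertex order, the value of $n_1$, and the lexicographically smallest $E$) is a function of $G$ alone.
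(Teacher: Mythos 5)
Your algorithm has the same three-phase skeleton as the paper's (greedy prefix with no exceptional queries, a single exceptional batch, a brute-force extension), but with different parameters: you stop the greedy at $n^{1-2\epsilon}$ to reach $(1-2\epsilon)L$ and ask the brute-force pool for $3\epsilon L$, whereas the paper stops the greedy at roughly $n^{1-\epsilon}$ (padded by $\mathrm{polylog}$ factors) to reach $(1-\epsilon)L$ and asks the pool for only $2\epsilon L$, which is exactly what $\alpha(\G(n^{\epsilon},p))\approx 2\epsilon L$ supplies. Your split over-provisions the pool ($\alpha(G[C])\approx 4\epsilon L$ is available but only $3\epsilon L$ is needed), and the price shows up in the budget: your leading-order count is $(3\epsilon-\tfrac{3\epsilon^2}{2})L^2$ versus the paper's $2\epsilon L^2$. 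Your claim that $n_1=n^{1-2\epsilon}$ is a tight choice is therefore not quite right: any exponent $1-\beta$ with $\epsilon\le\beta$ and $(\epsilon+\beta)\bigl(2-(\beta-\epsilon)\bigr)\le 6\epsilon$ works, and $\beta=\epsilon$ is both feasible and gives more room.

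There is, however, a genuine gap: Definition~\ref{def:ADM-ALG} requires the bound $|\tbinom{\A_n(G)}{2}\cap E_n|\le 3\epsilon\log_b^2(np)$ to hold \emph{deterministically for every} input $G$, not merely with high probability. Your algorithm sets $|E|=\lceil(1+\epsilon)L\rceil-|I_1|$ with no a-priori cap, so on atypical instances where $|I_1|$ is small (in the extreme $|I_1|=0$) you get $|E|\approx(1+\epsilon)L$ and $\tbinom{|E|}{2}\approx\tfrac{(1+\epsilon)^2}{2}L^2$, which exceeds $3\epsilon L^2$ for small $\epsilon$. Your budget computation ``substitutes the concentrations,'' i.e.\ plugs in w.h.p.\ values, which only establishes $3\epsilon$-goodness on a high-probability event. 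The paper sidesteps this by hard-capping the greedy ($|I_T|\le\ell$ always) and the brute-force output ($|J|\le\lfloor 2\log_b r\rfloor$ always), making the count a deterministic product of two bounded quantities. Your construction needs the analogous fix: deterministically cap $|I_1|\le\lceil(1-2\epsilon)L\rceil$ in Phase~1 and $|E|\le\lceil 3\epsilon L\rceil$ in the extension step, and then verify both that the worst-case count over all admissible $(|I_1|,|E|)$ stays below $3\epsilon L^2$ and that the capped values are still attained w.h.p. The same remark applies to your pure-brute-force fallback for $\epsilon\ge 2-\sqrt3$: you must cap the output IS at size $\lceil(1+\epsilon)L\rceil$, since the true maximum independent set has size $\approx 2L$ and would blow the budget.
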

Note that since the algorithm in Theorem~\ref{thm:algorithm} is deterministic, the probability is taken with respect to $G\sim \mathbb{G}(n,p)$ only (cf.~\eqref{eq:thm:MAIN} above). 
Our algorithm~$\A$  has superpolynomial running time $n^{(2+o(1))\eps^2 \log_b(n)}$; see Section~\ref{sec:PF-Greedy-BruteForce} for further~details.

Importantly, Theorem~\ref{thm:algorithm} asserts that our algorithmic lower bound in Theorem~\ref{thm:MAIN} is tight not only in terms of the threshold $\log_b(n)$, but also in the number\footnote{The analysis in Section~\ref{sec:PF-Greedy-BruteForce} reveals that Algorithm~$\A$ from Theorem~\ref{thm:algorithm} queries a total of $n^{\min\{1,2\eps\}+o(1)}$ many future edges (i.e., significantly more than those in the independent set), 
which demonstrates that it can be advantageous for online algorithms to query additional future edges outside of the final independent set.} 
of future edge queries. That is:
(i) for any $\epsilon>0$ and sufficiently small $c$, online algorithms fail to return an independent set of size $(1+\epsilon)\log_b(np)$; whereas (ii) for constant $p$, there exists a $3\epsilon$-restricted online algorithm that successfully returns an independent set of this size. Consequently, the scaling $\log_b^2(n)$ arising
in~\eqref{eq:CondInformal} is sharp.
%

\section{Overview of Technical Arguments}\label{sec:TechOverview}
Our algorithmic hardness proof exploits the geometry of large independent sets. Specifically, we develop a new variant of the Overlap Gap Property (OGP) tailored to the online setting. 

We first describe the classical OGP arguments for establishing hardness, which rely on two pillars. First, one shows that certain $m$-tuples of near-optimal solutions across correlated instances (called \emph{interpolation paths}) do not exist—regardless of the algorithm. Second, one proves that such $m$-tuples must be accessible to successful algorithms, leading to a contradiction.
  
  For random graphs, a canonical interpolation path consists of a sequence $G^{(1)}, G^{(2)}, \cdots ,G^{(\binom{n}{2})}$ of correlated copies of $\mathbb{G}(n,p)$, where, for each $i$, $G^{(i+1)}$ is obtained from $G^{(i)}$ by resampling a single edge. Thus, $G^{(1)}$ and $G^{(\binom{n}{2})}$ are independent, whereas the successive pairs $G^{(i)}, G^{(i+1)}$ are highly correlated. The argument exploits this correlation along with algorithmic stability. For certain other models, e.g., discrepancy minimization and binary perceptron~\cite{gamarnik2023geometric}, the interpolation involves random matrices where a certain fraction of columns are shared across instances and the remaining columns are resampled~independently.

The situation however is different in our setting. To begin with, even for the classical online algorithms (Algorithm~\ref{Alg:Greedy}), standard arguments fail to establish hardness at the $(1+\epsilon)\log_b(np)$ threshold for all~$\epsilon>0$. To obtain sharp bounds, we introduce substantial technical refinements.

\paragraph{Interpolation Paths with Temporal Structure} Let $G_1\sim \mathbb{G}(n,p)$. One of our new technical ingredients is to endow the interpolation path with a \emph{temporal} structure that evolves as the algorithm runs on the random graph~$G_1$. For suitable~$m \ge 1$, at each step $1 \le T \le n$, we construct $m$ correlated copies $G_1^{(T)},\ldots G_m^{(T)}$ of $\mathbb{G}(n,p)$ based on the available information (all with the same vertex set). 

For concreteness, let $\mathcal{F}_T$ be the set of vertex-pairs whose edge-status has been revealed by step~$T$. 
If there is no look-ahead, then~$\mathcal{F}_T$ contains all vertex-pairs $\{i,j\}$ among the inspected vertices $V_T$, i.e., $\mathcal{F}_T = \binom{V_T}{2}$. 
In the general case, $\mathcal{F}_T = \binom{V_T}{2}\cup E_T$, where $E_T:= S_1\cup\cdots \cup S_T$ is the set of future vertex-pairs queried by the algorithm until step~$T$. 
Notice that both $V_T$ and $E_T$ depend on the algorithm~$\A$ as well as the graph~$G_1$.


We now construct the correlated graphs $G_1^{(T)},\ldots G_m^{(T)}$.  
For~$i=1$ we set $G_1^{(T)}=G_1$, which means that the first copy is identical to the original random graph $G_1\sim \mathbb{G}(n,p)$. 
For each~$2 \le i \le m$, the edge-status of all vertex-pairs $\{u,v\} \in \mathcal{F}_T$ revealed during the first $T$ rounds is the same in $G_1$ and $G_{i}^{T}$, 
and all other vertex-pairs $\{u,v\} \not\in \mathcal{F}_T$ are included as an edge into $G_i^{(T)}$ independently with probability~$p$. 


This construction ensures that at each step $T$, the edge-status of all so-far revealed vertex-pairs are shared across the $m$ correlated copies, while the edge-status of all so-far unrevealed edges remain independent. Since $\mathcal{F}_T$ depends on algorithm's trajectory (see Definition~\ref{def:Operation}), the interpolation path is \emph{dynamic}, evolving as the algorithm progresses. In contrast, prior OGP-based barriers are oblivious to the algorithm and lack this temporal feature. Such adaptivity is essential in our setting, due to the online nature of the algorithm $\A$.

\paragraph{Stopping Time} Another crucial technical ingredient is a carefully defined stopping time $\tau$, which marks the first time the algorithm’s partial output on~$G_1$ reaches a critical size. Recall from Definition~\ref{def:Operation} that $\mathcal{A}_t(G_1)$, the algorithm's output after~$t$ steps, is an independent set for all~$t$. 
Ignoring some technicalities, we intuitively define $\tau$ as the first step $t \ge 1$ where $|\A_t(G_1)|$ has size $N:=(1-\epsilon/2)\log_b(np)$.

Importantly, $\tau$ has two source of randomness: the graph $G_1$, as well as the randomness of $\A$. Now define $I_i$ as the output of $\A$ on the correlated copy $G_i^{(\tau)}$ for $1\le i\le m$. Our argument leverages the geometric properties of the $m$-tuples of independent sets $I_1,\dots,I_m$. Specifically, we prove that there exists no $m$-tuple $(I_1,\dots,I_m)$ such that:%
{\vspace{-0.25em}%
\begin{romenumerate}
\itemsep 0em \parskip 0.125em  \partopsep=0.125em \parsep 0.125em  
    \item For each $1\le i\le m$, $I_i$ is an independent set in $G_i^{(\tau)}$ with $|I_i|\ge (1+\epsilon)\log_b (np)$,
    \item For every $1\le i\le m$, $I_i\cap V_T = I_1\cap V_T$ with $|I_i\cap V_T|=N$,
		\item $|\binom{I_i}{2}\cap E_T|\le c\log_b^2(np)$.\vspace{-0.125em}
\end{romenumerate}}%
\noindent 
For technical details and relevant structure, see Definition~\ref{def:Forbidden-Pattern}. To the best of our knowledge, this is the first OGP barrier where a stopping time argument appears necessary. 


These two components—adaptive interpolation and algorithm-dependent stopping time—are essential: they are required even for obstructing the classical greedy algorithm with no look-ahead. Without these arguments, it appears not possible to establish algorithmic hardness for all $\epsilon>0$. Notably, the careful design of the stopping time also enables us to rule out online algorithms that succeed with only vanishing probability, thereby strengthening our lower bound (as discussed further below).
\paragraph{Strong Hardness} A final key aspect of our approach is \emph{strong hardness}. Classical OGP arguments typically rule out algorithms that succeed with high probability $1-o(1)$ by controlling the \emph{success event}, i.e., the probability that an algorithm outputs near-optimal solutions across all correlated copies. This is usually done via union bounds and does not preclude algorithms that succeed with small or vanishing probability.

Recent work~\cite{huang2025strong} develops a framework for proving strong hardness—i.e., ruling out algorithms that succeed with vanishing probability $o(1)$—but their techniques are tailored to stable algorithms, rather than the online setting we consider. 

To establish strong hardness in our context, we design a modified success event—based on stopping time $\tau$—and apply several careful layers of Jensen’s inequality. This allows us to bound the success probability of \emph{any} online algorithm by $\exp(-\Omega(\log^2(n))$. While~\cite{gamarnik2023geometric} also applies Jensen's inequality in an online context and prove strong hardness, their argument is much simpler, essentially conditioning on common randomness and applying the tower property. Furthermore, that argument breaks down in our setting: it only rules out algorithms with success probability $1-o(1)$. In contrast, our refined approach yields a stronger guarantee. See Lemma~\ref{lem:comparison} for technical details. We believe these techniques may be broadly useful for establishing strong hardness in other online or adaptive algorithmic models.

\section{Impossibility Result: Proof of Theorem~\ref{thm:MAIN}}\label{sec:proof:main}
In this section, we prove our main impossibility result, Theorem~\ref{thm:MAIN}, 
which asserts that online algorithms operating according to Definition~\ref{def:Operation} and satisfying~\eqref{eq:CondInformal} cannot find independent sets of size $(1 + \eps)\log_b(np)$ with non-negligible probability. 

To this end, we may without loss of generality assume that $\eps \le 1$ (since the desired estimate~\eqref{eq:thm:MAIN} for~$\eps \ge 1$ is implied by estimate~\eqref{eq:thm:MAIN} for~$\eps=1$). 
Given~$\eps \in (0,1]$, with foresight we define 
\begin{equation}\label{eq:p-c:m}
c := \frac{\eps^2}{8},
\qquad 
\xi := \frac{\eps^2}{64}, 
\quad \text{ and } \quad 
m := \lrceil{\frac{16}{\epsilon^2}} , 
\end{equation}
and defer the choice of the constants $d,n_0 >1$.
We henceforth fix a $c$-restricted algorithm $\A$ per~\eqref{eq:CondInformal}.   

To avoid clutter, we henceforth always tacitly assume that $n$ is sufficiently large whenever necessary.\footnote{How large may depend on all other parameters $\eps,c,\xi,m,d$.} 
The parameter~$\log_b(np)$ is a bit cumbersome to work with, since in different parts of the edge-probability range $d/n\le p \le 1-n^{-1/d}$, it behaves quite differently. For this reason, we now record a convenient lower bound based on standard asymptotic estimates.
\begin{lemma}\label{lemma:Logb-LowerBd}
    There exists a universal constant $c_0>0$ (which does not depend on $\eps,c,\xi,m,d,n_0$) such~that 
\begin{equation}\label{eq:logbnp:estimate}
\log_b(np) 
= \frac{\log(np)}{-\log(1-p)} 
\ge \frac{c_0 \log(d)}{p} .
\end{equation}
\end{lemma}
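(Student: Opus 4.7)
The inequality only involves the edge-probability range $d/n \le p \le 1-n^{-1/d}$, so the natural plan is to split this range into a sparse regime $p\le 1/2$ and a dense regime $p> 1/2$, bound the numerator $\log(np)$ from below and the denominator $-\log(1-p)$ from above in each regime using elementary calculus estimates, and then show that the resulting bound on $\log_b(np)$ exceeds $c_0 \log d /p$ for some universal constant $c_0 > 0$ (I expect $c_0=1/2$ to suffice).

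\textbf{Sparse regime} ($d/n \le p \le 1/2$). Here I would use the Taylor expansion
\[
-\log(1-p) = \sum_{k\ge 1} p^k/k \;\le\; \frac{p}{1-p} \;\le\; 2p,
\]
together with $np \ge d$, which gives $\log(np) \ge \log d$. Dividing yields $\log_b(np) \ge \log d/(2p)$, as required with $c_0=1/2$.

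\textbf{Dense regime} ($1/2 < p \le 1-n^{-1/d}$). Here $1-p \ge n^{-1/d}$ implies $-\log(1-p)\le (\log n)/d$, while $np\ge n/2$ implies $\log(np)\ge \log n -\log 2 \ge \tfrac{1}{2}\log n$ for all $n$ large enough. Combining these gives $\log_b(np) \ge d/2$. Since $p\le 1$, it suffices to verify $d/2 \ge (\log d)/(2p)\cdot 1$, i.e., $dp \ge \log d$; and since $p\ge 1/2$, this follows from $d/(2\log d)\ge 1$. The elementary calculus fact $\min_{d>1} d/\log d = e$ (attained at $d=e$) shows $d/(2\log d)\ge e/2 >1$ for every $d>1$, so the bound $\log_b(np)\ge (1/2)\log d/p$ holds in this regime as well.

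\textbf{Main obstacle.} The only delicate point is handling the endpoint behavior at $p\to 1^-$, where both the numerator and denominator diverge; the key observation is that $1-p\ge n^{-1/d}$ precisely gives the right control on $-\log(1-p)$ in terms of $d$ and $\log n$, and this is what allows the denominator bound $(\log n)/d$ to dominate. Combining the two regimes then yields the lemma with the explicit universal constant $c_0 = 1/2$ (independent of $\eps,c,\xi,m,d,n_0$), which is the conclusion demanded by the statement.
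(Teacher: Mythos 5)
Your proof is correct and follows essentially the same route as the paper's: split the edge-probability range at a threshold near $p=1/2$, bound $-\log(1-p)$ from above by a constant multiple of $p$ in the sparse regime via a series estimate, and use $1-p\ge n^{-1/d}$ to control the denominator in the dense regime. The only difference is cosmetic: the paper splits at a ``sufficiently small'' $p_0\le 1/2$ and quotes the looser bound $-\log(1-p)\le 9p$ with an implicit constant, whereas you split exactly at $p=1/2$, use the clean geometric-series bound $-\log(1-p)\le p/(1-p)\le 2p$, and pin down the explicit constant $c_0=1/2$, verifying $d/(2\log d)\ge 1$ in place of the paper's equivalent check that $\min\{\log d,\,d\}=\log d$.
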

The proof of Lemma~\ref{lemma:Logb-LowerBd} is routine, and the details are deferred to Appendix~\ref{apx:lemma:Logb-LowerBd}.

\subsection{Reduction to deterministic algorithms} 
Note that the algorithm $\A$ in Theorem~\ref{thm:MAIN} is randomized: 
we shall use a standard averaging-based reduction to show that in our upcoming arguments it suffices to consider a deterministic algorithm. 
\begin{lemma}\label{lem:deterministic}
There exists~${\omega^*\in \Omega}$ such that the deterministic algorithm ${\A(G):=\A(G,\omega^*)}$ satisfies 
\begin{equation}\label{eq:deterministic-A}
\mathbb{P}_{G\sim \G(n,p),\omega \in \Omega}\bigl[|\A(G,\omega)|\ge (1+\epsilon)\log_b(np)\bigr] \le \mathbb{P}_{G\sim \G(n,p)}\bigl[|\A(G)|\ge (1+\epsilon)\log_b(np)\bigr] .
\end{equation}
\end{lemma}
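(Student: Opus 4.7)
\textbf{Proof proposal for Lemma~\ref{lem:deterministic}.}
The plan is a standard averaging (derandomization) argument, exploiting the fact that the internal randomness~$\omega$ is independent of the input graph~$G$. First, I will rewrite the joint probability on the left-hand side of~\eqref{eq:deterministic-A} by conditioning on~$\omega$: since $G$ and~$\omega$ are independent, Fubini's theorem yields
\begin{equation*}
\mathbb{P}_{G,\omega}\bigl[|\A(G,\omega)|\ge (1+\epsilon)\log_b(np)\bigr]
= \mathbb{E}_{\omega}\Bigl[\mathbb{P}_{G}\bigl[|\A(G,\omega)|\ge (1+\epsilon)\log_b(np) \,\big|\, \omega\bigr]\Bigr].
\end{equation*}
By the standard first-moment / pigeonhole principle, there must exist a realization~$\omega^*\in \Omega$ for which the conditional probability inside the expectation is at least as large as the expectation itself. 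Fixing this~$\omega^*$ and defining $\A(G):= \A(G,\omega^*)$ yields the deterministic algorithm asserted in the lemma, and inequality~\eqref{eq:deterministic-A} follows immediately.

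Second, I need to verify that the resulting deterministic algorithm~$\A(G)=\A(G,\omega^*)$ is still $c$-good in the sense of Definition~\ref{def:ADM-ALG}. Since the original randomized algorithm is $c$-good, the inequality $|\binom{\A_n(G)}{2}\cap E_n|\le c\log^2_b(np)$ is required to hold for every realization of~$(G,\omega)$ (it is an almost-sure property of the algorithm, not a property in expectation). Restricting to the single seed~$\omega^*$ therefore preserves $c$-goodness, so~$\A(\cdot)=\A(\cdot,\omega^*)$ remains an admissible algorithm to which Theorem~\ref{thm:MAIN} applies.

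There is no genuine obstacle here; the only thing to be careful about is the independence of~$\omega$ and~$G$, which justifies the Fubini step and the conditioning, and the fact that $c$-goodness is a pointwise (rather than in-expectation) property, which ensures that freezing~$\omega^*$ does not violate Definition~\ref{def:ADM-ALG}. In the remainder of the proof of Theorem~\ref{thm:MAIN}, we may therefore assume without loss of generality that~$\A$ is deterministic.
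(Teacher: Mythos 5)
Your proof is correct and uses essentially the same averaging/pigeonhole argument as the paper. Your additional remark that $c$-goodness is a pointwise property preserved under fixing~$\omega^*$ is a sensible sanity check, though the paper treats it as implicit.
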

\begin{proof}
By explicitly averaging over the randomness $\omega \in \Omega$, we can~write 
\[
\mathbb{P}_{G\sim \G(n,p),\omega\in \Omega}\bigl[|\A(G,\omega)|\ge (1+\epsilon)\log_b(np)\bigr] = \mathbb{E}_{\omega \in \Omega}\Bigl[\mathbb{P}_{G\sim \G(n,p)}\bigl[|\A(G,\omega)|\ge (1+\epsilon)\log_b(np)\bigr]\Bigr] .
\]
Hence there exists a choice of~${\omega^*\in \Omega}$ such that ${\A(G)=\A(G,\omega^*)}$ satisfies estimate~\eqref{eq:deterministic-A}.
\end{proof}

We henceforth fix a choice $\omega^*\in \Omega$ that is guaranteed by Lemma~\ref{lem:deterministic}, and set $\A(G):=\A(G,\omega^*)$. 
In the remainder of the proof of Theorem~\ref{thm:MAIN}, 
our goal will be to derive an upper bound on the success probability of the deterministic algorithm $\A$ (which suffices by estimate~\eqref{eq:deterministic-A} from Lemma~\ref{lem:deterministic}). 
Conceptually, the main advantage of working with a deterministic algorithm~$\cA$ is that the only source of randomness is the graph on which~$\cA$ operates.

\subsection{Interpolation paths: correlated random graphs}\label{sec:interpolation}
The strategy of our proof of Theorem~\ref{thm:MAIN} is to analyze the behavior of the algorithm~$\cA$ on multiple correlated random graph families, referred to as interpolation paths.  
To this end we define 
\begin{equation}\label{eq:G_1}
    G_1\sim \mathbb{G}(n,p)
\end{equation}
as an  independent copy of the binomial random graph $\G(n,p)$, i.e., where each possible pair of vertices $\{u,v\} \in \tbinom{[n]}{2}$ is included as an edge independently with probability~$p$. 
We shall run the algorithm~$\cA$ on~$G_1$ to dynamically construct the families
\begin{equation*}\label{eq:Phi-T}
G_1^{(T)},\dots,G_m^{(T)} \qquad \text{for $1 \le T \le n$}
\end{equation*}
of correlated random graphs as follows, where to avoid clutter we henceforth write 
\begin{equation}\label{def:VT:ET}
V_T=V_T(G_1)=\{v_1, \ldots, v_T\} \qquad \text{ and } \qquad  E_T=E_T(G_1) = S_1\cup \cdots \cup S_T .
\end{equation}
For the first graph sequence~$G_1^{(T)}$ we simply set 
\begin{equation}\label{eq:G_1T}
G^{(T)}_{1}:=G_1 \qquad \text{for all~$1 \le T \le n$,}
\end{equation}
and the basic idea for the other graphs $G_i^{(T)}$ with~$1 \le T \le n$ is heuristically speaking as follows: 
the edge-status of all vertex-pairs revealed by~$\cA$ during the first~$T$ rounds is the same in~$G_1$ and~$G_i^{(T)}$, 
and all other vertex-pairs are included as an edge into~$G_i^{(T)}$ independently with probability~$p$. 
More formally, denoting by~$e_{i,T}\bigpar{\{u,v\}} \in \{0,1\}$ the edge-status of the pair~$\{u,v\}$ in~$G_i^{(T)}$, 
we~define
\begin{align}
\label{def:GiT:early}
e_{i,T}\bigpar{\{u,v\}} := e_{1,T}\bigpar{\{u,v\}} \qquad & \text{for all~$1 \le T \le n$, $2 \le i \le T$ and $\{u,v\} \in \tbinom{V_T}{2} \cup E_T$,}\\
\label{def:GiT:late}
e_{i,T}\bigpar{\{u,v\}} \stackrel{\text{i.i.d.}}{\sim} {\rm Bernoulli}(p) \qquad & \text{for all~$1 \le T \le n$, $2 \le i \le T$ and $\{u,v\} \not\in \tbinom{V_T}{_2} \cup E_T$.}
\end{align}

We now record some basic properties of the correlated random graphs $G_1^{(T)},\dots,G_m^{(T)}$.
%
\begin{remark}\label{rem:marginals}
For any $1\le i\le m$ and $1\le T\le n$ we have $G_i^{(T)} \sim \G(n,p)$, 
since definitions~\eqref{eq:G_1T} and \eqref{def:GiT:early}--\eqref{def:GiT:late} ensure that each potential edge of~$G^{(T)}_{i}$ is included independently with probability~$p$. 
\end{remark}
\begin{remark}\label{rem:run}
For any~$1 \le T \le n$, the first~$T$ rounds of the algorithm~$\cA$ on each graph $G_1^{(T)},\dots,G_m^{(T)}$ behave exactly the same as on the graph~$G_1$, 
since definitions~\eqref{def:GiT:early} and~\eqref{eq:G_1T} ensure that all vertex-pairs revealed during those rounds have exactly the same edge-status as in~$G^{(T)}_{1}=G_1$ 
(which by Definition~\ref{def:Operation} is all the information that the deterministic algorithm $\A$ uses to make its selections and decisions).  
\end{remark}

\subsection{Comparison argument via successful event~$\cS$}\label{sec:success}
Our proof of Theorem~\ref{thm:MAIN} hinges on the careful definition of the `successful' event $\cS$ in~\eqref{eq:def:S} below, 
which captures the idea that the algorithm~$\cA$ finds an independent set of size at least $(1+\eps) \log_b(np)$ in the correlated random graphs from Section~\ref{eq:Phi-T}. 
Turning to the details, setting with foresight 
\begin{equation}\label{eq:TAU}
\tau : = \min\Bigcpar{1 \le t \le n: \: |\A_t(G_1)| = \bigceil{(1-\epsilon/2)\log_b(np)} \text{ or } \: t=n}, 
\end{equation}
we define the `successful' event 
\begin{equation}\label{eq:def:S}
\cS := \bigcap_{1\le i\le m}\Bigcpar{\bigl|\cA(G_i^{(\tau)})\bigr|\ge (1+\epsilon)\log_b(np)} = \bigcap_{1\le i\le m} \cE_{i,\tau},
\end{equation}
where~$\cE_{i,T}$ denotes the event that ${\bigl|\cA(G_i^{(T)})\bigr|} \ge {(1+\epsilon)\log_b(np)}$. 
The strategy of our comparison-based argument is to estimate the probability $\Pr[\cS]$ in two different~ways (from below and~above), 
which eventually establishes the desired probability bound~\eqref{eq:thm:MAIN} of our main result Theorem~\ref{thm:MAIN}; see Section~\ref{sec:put:together}.
\begin{remark}
Definitions~\eqref{eq:TAU} and~\eqref{eq:def:S} of the stopping time~$\tau$ and the event~$\cS$ are an important ingredient of our proof. 
Specifically, in the proof of Lemma~\ref{lem:comparison} below they are key for avoiding the need to take a union bound over all possible~$1 \le T \le n$ where $|\A_T(G_1)| = \lrceil{(1-\epsilon/2)\log_b(np)}$ might occur. This is crucial for showing that the
resulting probability bound~\eqref{eq:thm:MAIN} goes to~zero in our main result Theorem~\ref{thm:MAIN}, i.e., to establish {\bf strong hardness} guarantees.
\end{remark}

\subsubsection{Lower bound on $\Pr[\cS]$: conditioning and convexity}\label{sec:success:lower}
The following lemma estimates the `success' probability $\Pr[\cS]$ from below, 
by relating it to the probability~$\Pr[\cE]$ that the algorithm~$\cA$ finds an independent set of size at least $(1+\eps) \log_b(np)$ in~$G_1 \sim \G(n,p)$. 
Concretely, the below proof of~\eqref{eq:E-s-A1-m} combines the careful definitions~\eqref{eq:TAU} and~\eqref{eq:def:S} of~$\tau$ and~$\cS$ 
with an application of Jensen's inequality and the fact that, after suitable conditioning, certain events become conditionally independent (due to the way the algorithm~$\cA$ makes its decisions, see~Definition~\ref{def:Operation}). 
\begin{lemma}\label{lem:comparison}
Denoting by~$\cE$ the event that ${\bigl|\cA(G_1)\bigr|} \ge {(1+\epsilon)\log_b(np)}$, 
we~have 
\begin{equation}\label{eq:E-s-A1-m}
\Pr[\cS]  \ge \Pr[\cE]^m .
\end{equation}
\end{lemma}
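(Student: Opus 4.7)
My plan is to introduce the $\sigma$-algebra $\cF_\tau$ generated by all edge-statuses revealed by the deterministic algorithm $\cA$ during its first $\tau$ rounds on $G_1$ (equivalently, the edge-statuses of the pairs in $\binom{V_\tau}{2} \cup E_\tau$, together with $V_\tau$, $E_\tau$ and $\tau$ itself, all of which are measurable functions of the revealed information since $\cA$ is deterministic). By Remark~\ref{rem:run}, running $\cA$ on any of the graphs $G_1 = G_1^{(\tau)}, G_2^{(\tau)}, \ldots, G_m^{(\tau)}$ produces an identical trajectory during the first $\tau$ rounds, so $\cA_\tau(G_i^{(\tau)})$ is $\cF_\tau$-measurable and common across~$i$.

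The key claim is that, conditional on $\cF_\tau$, the graphs $G_1, G_2^{(\tau)}, \ldots, G_m^{(\tau)}$ are independent and identically distributed. Indeed, on pairs $\{u,v\} \in \binom{V_\tau}{2} \cup E_\tau$ all these graphs coincide with the ($\cF_\tau$-measurable) edge-statuses of $G_1$ by~\eqref{eq:G_1T} and~\eqref{def:GiT:early}. On the remaining pairs, the edges of $G_1$ are independent of $\cF_\tau$ (which exposes only pairs in $\binom{V_\tau}{2}\cup E_\tau$) and therefore remain i.i.d.\ $\mathrm{Bernoulli}(p)$, while by~\eqref{def:GiT:late} the remaining edges of each $G_i^{(\tau)}$ with $i\ge 2$ are independently resampled as i.i.d.\ $\mathrm{Bernoulli}(p)$ variables, independent across~$i$. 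Since $\cA$ is deterministic, the events $\cE_{1,\tau}, \ldots, \cE_{m,\tau}$ are therefore conditionally i.i.d.\ given $\cF_\tau$, each with the same conditional probability $q := \Pr[\cE_{1,\tau} \mid \cF_\tau]$; hence
\[
\Pr[\cS] = \mathbb{E}\Bigsqpar{\prod_{i=1}^{m} \Pr[\cE_{i,\tau} \mid \cF_\tau]} = \mathbb{E}[q^m].
\]

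Applying Jensen's inequality to the convex function $x \mapsto x^m$ on $[0,1]$ then gives $\mathbb{E}[q^m] \ge (\mathbb{E}[q])^m$, and $\mathbb{E}[q] = \Pr[\cE_{1,\tau}] = \Pr[\cE]$ because $G_1^{(\tau)} = G_1$ by~\eqref{eq:G_1T}; this yields~\eqref{eq:E-s-A1-m}. The main subtlety I anticipate is the careful justification of the conditional i.i.d.\ claim when $\tau$ is a \emph{random} stopping time, whereas the definitions~\eqref{def:GiT:early}--\eqref{def:GiT:late} are phrased for a fixed $T$; the standard remedy is to decompose according to the value of~$\tau$ (using that $\{\tau = T\}$ is measurable with respect to the information revealed by round~$T$) or to work directly with the stopping-time $\sigma$-algebra $\cF_\tau$. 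The specific shape of the definitions~\eqref{eq:TAU} and~\eqref{eq:def:S} of~$\tau$ and~$\cS$ is precisely what makes this clean conditioning argument available, avoiding a costly union bound over all possible values of~$T$.
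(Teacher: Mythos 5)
Your proof is correct and matches the paper's argument in all essential respects: condition on the information revealed through round~$\tau$, observe that the events $\cE_{i,\tau}$ become conditionally i.i.d.\ by Remark~\ref{rem:run} and~\eqref{def:GiT:late}, and apply Jensen's inequality to $x \mapsto x^m$. The only difference is bookkeeping: the paper sidesteps the stopping-time $\sigma$-algebra formalism by decomposing explicitly over $\{\tau=T\}$ (with $\Xi_T$ playing the role of $\cF_\tau$ on that event and the key identity $\sum_T \ind\{\tau=T\}\Pr[\cE\mid\Xi_T]^m = (\sum_T \ind\{\tau=T\}\Pr[\cE\mid\Xi_T])^m$), which is precisely the ``standard remedy'' you flag at the end.
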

\begin{proof}
Let~$\Xi_T$ encode the edge-status of all vertex-pairs~$\{u,v\} \in \tbinom{V_T}{2} \cup E_T$, 
i.e., the edge-status of all vertex-pairs revealed during the first~$T$ rounds of the algorithm~$\cA$ 
(which intuitively corresponds to the common randomness of $G^{(T)}_1, \ldots, G^{(T)}_m$). 
We have
\begin{equation}\label{eq:Condition-1} 
\begin{split}
    \mathbb{P}[\tau=T, \; \cS] &= \mathbb{P}[\tau=T, \; \cE_{1,T},\dots,\cE_{m,T}] \\
&=\mathbb{E}\bigsqpar{\mathbb{P}[\tau=T, \; \cE_{1,T},\dots,\cE_{m,T} \: | \: \Xi_T]} \\
&=\mathbb{E}\bigsqpar{\ind\{\tau=T\}\Pr[\cE_{1,T},\dots,\cE_{m,T} \: | \: \Xi_T]},
\end{split}
\end{equation}
where the last equality in~\eqref{eq:Condition-1} follows from the fact that the event $\tau=T$ is determined by $\Xi_T$. 
Recall that~$\cE_{i,T}$ denotes the event that ${\bigl|\cA(G_i^{(T)})\bigr|} \ge {(1+\epsilon)\log_b(np)}$, 
where $\A$ is a deterministic algorithm that runs on the random graph~$G_i^{(T)}$.  
Note that by~\eqref{def:GiT:late}, conditional on the edge-status $\Xi_T$ of the first~$T$ rounds,  all remaining pairs~$\{u,v\} \not\in \tbinom{V_T}{2} \cup E_T$ are included as edges into each graph $G_1^{(T)}, \ldots, G_m^{(T)}$ independently with probability~$p$.  
Combined with Remark~\ref{rem:run} and $G_1^{(T)}=G_1$, it thus follows that 
\begin{align}\label{eq:independence}
\Pr[\cE_{1,T},\dots,\cE_{m,T} \: | \: \Xi_T] 
& = \prod_{1 \le i \le m}\Pr[\cE_{i,T} \: \big| \: \Xi_T] 
= \mathbb{P}[\cE_{1,T} \mid \Xi_T]^m  
= \mathbb{P}[\cE\mid \Xi_T]^m .
\end{align}
Since $1 \le \tau \le n$ holds deterministically by definition~\eqref{eq:TAU}, 
after inserting~\eqref{eq:independence} into~\eqref{eq:Condition-1} it follows that 
\begin{equation}\label{eq:Use-Independence-of-Randomness}
\begin{split}
    \mathbb{P}[\cS] &= \sum_{1\le T\le n}\mathbb{P}[\tau=T, \; \cS] \\ 
    &=\sum_{1\le T\le n}\mathbb{E}\bigsqpar{\ind\{\tau=T\}\mathbb{P}[\cE\mid \Xi_T]^m}\\ 
    &=\mathbb{E}\Bigsqpar{\sum_{1\le T\le n}\ind\{\tau=T\}\mathbb{P}[\cE\mid \Xi_T]^m} . 
\end{split}
\end{equation}
We note that conditionally on 
$\Xi_1, \ldots, \Xi_n$ we have
\begin{align*}
\sum_{1\le T\le n}\ind\{\tau=T\}\mathbb{P}[\cE\mid \Xi_T]^m
=
\Bigpar{\sum_{1\le T\le n}\ind\{\tau=T\}\mathbb{P}[\cE\mid \Xi_T]}^m,
\end{align*}
which, by taking expectations, implies together with~\eqref{eq:Use-Independence-of-Randomness} that  
\begin{align}\label{eq:PrcS:equality}
\Pr[\cS] 
 = \EE\biggsqpar{\Bigpar{\sum_{1 \le T \le n}\indic{\tau=T}\Pr[\cE\mid \Xi_T]}^m} .
\end{align}
Using that~$x \mapsto x^m$ is convex for every~$x \ge 0$ (due to~$m \ge 1$), by applying Jensen's inequality to the right-hand side of~\eqref{eq:PrcS:equality} it follows~that 
\begin{equation}\label{eq:lower:Pr:cS}
\begin{split}
\Pr[\cS] 
 \ge \biggpar{\EE\Bigsqpar{\sum_{1 \le T \le n}\indic{\tau=T}\Pr[\cE\mid \Xi_T]}}^m .
\end{split}
\end{equation}
Since $1 \le \tau \le n$ holds deterministically by definition~\eqref{eq:TAU}, we also have 
\begin{equation}\label{eq:identity:E}
\begin{split}
\EE\Bigsqpar{\sum_{1 \le T \le n}\ind\{\tau=T\}\Pr[\cE\mid \Xi_T]}  = \sum_{1 \le T \le n} \EE\bigsqpar{\ind\{\tau=T\}\Pr[\cE \mid \Xi_T]} = \sum_{1 \le T \le n} \Pr[\tau=T, \; \cE] = \Pr[\cE],
\end{split}
\end{equation}
where the fact that the event $\tau=T$ is determined by $\Xi_T$ implies
\[
\mathbb{E}\bigsqpar{\ind\{\tau=T\}\Pr[\cE\mid \Xi_T]} = \mathbb{E}\bigsqpar{\mathbb{P}[\tau=T, \; \cE| \Xi_T]} = \mathbb{P}[\tau=T, \; \cE].
\]
Finally, combining~\eqref{eq:lower:Pr:cS} with~\eqref{eq:identity:E}
completes the proof of estimate~\eqref{eq:E-s-A1-m} from Lemma~\ref{lem:comparison}.
\end{proof}

\subsubsection{Upper bound on $\Pr[\cS]$: unlikely independent sets}\label{sec:success:upper}
To estimate the `success' probability $\Pr[\cS]$ defined in Section~\ref{sec:success} from above, 
the idea is to show that the event~$\cS$ implies the existence of a certain 
family of `overlapping' independent sets, 
which intuitively only exist with very small probability. 
The following definition and the subsequent lemmas formalize this~idea, 
where~$V_T = \{v_1, \ldots, v_T\}$ denotes the first $T$ vertices considered by the algorithm, 
and~$E_T = S_1\cup \cdots \cup S_T$ contains the additional vertex-pairs revealed in~(b) during the first~$T$ rounds of the~algorithm. 
\begin{definition}\label{def:Forbidden-Pattern}%
Let $N:=\lrceil{(1-\eps/2)\log_b(np)}$. 
For any integer $1 \le T \le n$, we denote by $X_{m,T}$ the number of $m$-tuples of sets $(I_1,\dots,I_m)$ with~$I_1,\dots,I_m\subseteq [n]$ that satisfy the following three~properties:%
{\vspace{-0.25em}%
\begin{romenumerate}
\itemsep 0em \parskip 0.125em  \partopsep=0.125em \parsep 0.125em  
    \item\label{def:Forbidden-Pattern:a} for every $1\le i\le m$, the set $I_i$ is an independent set in $G_i^{(T)}$, with $|I_i|\ge (1+\epsilon)\log_b(np)$, 
    \item\label{def:Forbidden-Pattern:c} we have ${I_i\cap V_T} = {I_1\cap V_T}$ for every $1\le i\le m$, with $|I_1\cap V_T|=N$, and 
		\item\label{def:Forbidden-Pattern:d} we have $\bigl|\binom{I_i}{2} \cap E_T\bigr| \le c \log_b^2(np)$ for every $1\le i\le m$.
\end{romenumerate}}%
\end{definition}
\begin{lemma}\label{lem:inclusion}
We have
\begin{equation}\label{eq:lem:inclusion}
\mathbb{P}[\cS] \le \sum_{1\le T\le n}\mathbb{P}[X_{m,T}\ge 1] .
\end{equation}
\end{lemma}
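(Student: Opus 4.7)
The plan is to exhibit, on the event $\cS$, an explicit $m$-tuple of sets witnessing $X_{m,\tau} \ge 1$; the lemma then follows from the union bound
\[
\Pr[\cS] \le \Pr[X_{m,\tau} \ge 1] = \sum_{T=1}^{n}\Pr[\tau = T,\; X_{m,T} \ge 1] \le \sum_{T=1}^{n}\Pr[X_{m,T}\ge 1].
\]
The natural candidate tuple is $(I_1,\dots,I_m) := (\cA(G_1^{(\tau)}),\dots,\cA(G_m^{(\tau)}))$, and the body of the argument consists of verifying the three properties of Definition~\ref{def:Forbidden-Pattern} for this tuple on $\cS$.

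Property~\ref{def:Forbidden-Pattern:a} is immediate from the definition $\cS = \bigcap_{i} \cE_{i,\tau}$: each $I_i$ is by construction an independent set in $G_i^{(\tau)}$, and the event $\cE_{i,\tau}$ forces $|I_i|\ge (1+\eps)\log_b(np)$. For property~\ref{def:Forbidden-Pattern:c}, I would first note that on $\cS$ we have $|\cA(G_1)|\ge(1+\eps)\log_b(np)>N$, and since $|\cA_t(G_1)|$ starts at~$0$ and grows by at most one per round, it must equal $N$ at some round $\le n$; hence $\tau < n$ and $|\cA_\tau(G_1)|=N$ exactly. By Remark~\ref{rem:run}, the first $\tau$ rounds of $\cA$ on each $G_i^{(\tau)}$ coincide with those on $G_1$, so the probed-vertex set $V_\tau$ and the current independent set $\cA_\tau(G_i^{(\tau)}) = \cA_\tau(G_1)$ are common to all $i$. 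By the online rules of Definition~\ref{def:Operation}, every vertex probed after round $\tau$ lies in $[n]\setminus V_\tau$, so $I_i\cap V_\tau = \cA(G_i^{(\tau)})\cap V_\tau = \cA_\tau(G_i^{(\tau)}) = \cA_\tau(G_1)$. This common intersection has size exactly $N$, giving property~\ref{def:Forbidden-Pattern:c}.

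For property~\ref{def:Forbidden-Pattern:d}, I would again invoke Remark~\ref{rem:run}: the exceptional set $E_\tau$ accumulated during the first $\tau$ rounds is the same when $\cA$ runs on $G_1$ as when it runs on any $G_i^{(\tau)}$. Since the exceptional sets are monotone in the round index, we have $E_\tau \subseteq E_n(G_i^{(\tau)})$, where $E_n(G_i^{(\tau)})$ denotes the final exceptional set produced by running $\cA$ on $G_i^{(\tau)}$. The $c$-good hypothesis applied to this run then yields
\[
\Bigl|\tbinom{I_i}{2}\cap E_\tau\Bigr| \le \Bigl|\tbinom{\cA(G_i^{(\tau)})}{2}\cap E_n(G_i^{(\tau)})\Bigr| \le c\log_b^2(np),
\]
establishing property~\ref{def:Forbidden-Pattern:d}. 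The main conceptual ingredient, and the only step requiring any real care, is the identification underlying property~\ref{def:Forbidden-Pattern:c}: one must simultaneously use (a)~that the definition of $\tau$ locks the common prefix-size to exactly $N$, (b)~that Remark~\ref{rem:run} makes the prefix independent sets and $V_\tau$ literally coincide across all interpolated graphs, and (c)~that the online nature of $\cA$ prevents any later round from modifying the contribution from $V_\tau$. Once these are pinned down, properties~\ref{def:Forbidden-Pattern:a} and~\ref{def:Forbidden-Pattern:d} follow quickly, and the union bound above completes the proof.
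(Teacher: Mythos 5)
Your proof is correct and takes essentially the same route as the paper: both define the witness tuple $I_i := \cA(G_i^{(\tau)})$, verify properties (i)--(iii) of Definition~\ref{def:Forbidden-Pattern} using Remark~\ref{rem:run}, the definition of $\tau$, monotonicity of the exceptional sets, and the $c$-good hypothesis, and then sum over the value of $\tau$. The only cosmetic difference is that you first establish $\cS \subseteq \{X_{m,\tau}\ge 1\}$ and then decompose over $T$, whereas the paper first writes $\Pr[\cS]=\sum_T\Pr[\tau=T,\cS]$ and then proves the inclusion $\{\tau=T,\cS\}\subseteq\{X_{m,T}\ge 1\}$; your extra remark justifying $|\cA_\tau(G_1)|=N$ on $\cS$ (via the one-step growth of $|\cA_t|$) makes explicit a point the paper leaves implicit, and is a welcome clarification.
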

\begin{proof}
Since $1 \le \tau \le n$ holds deterministically by definition~\eqref{eq:TAU}, 
it follows that 
\[
\mathbb{P}[\cS] = \sum_{1\le T\le n}\mathbb{P}[\tau=T, \; \cS] .
\]
To complete the proof of estimate~\eqref{eq:lem:inclusion}, it thus suffices to show that for any~$1 \le T \le n$ we~have
\begin{equation}
\label{eq:N-m-T-1}    
\{\tau=T, \; \cS\}\subseteq \{X_{m,T}\ge 1\}.
\end{equation}
To justify this inclusion, we define 
\begin{equation}\label{def:Ii}
I_i := \A_n(G_i^{(\tau)}) \qquad \text{for all $1\le i\le m$.}
\end{equation}
On the event $\{\tau=T, \; \cS\}$, note that each~$I_i$ is an independent set in $G_i^{(T)}$, and that~$\cS$ implies 
\[
\min_{1\le i\le m}|I_i| = \min_{1\le i\le m}\bigl|\A(G_i^{(T)})\bigr| \ge (1+\epsilon)\log_b(np),
\]
so that $(I_1,\dots,I_m)$ satisfies property~\ref{def:Forbidden-Pattern:a} in Definition~\ref{def:Forbidden-Pattern}.
Observing that the algorithm (see Definition~\ref{def:Operation}) satisfies $\A_n(G_1^{(T)}) \cap V_T = \A_T(G_1^{(T)})$, 
on the event $\{\tau=T\}$ it follows using the definitions~\eqref{eq:G_1T} and~\eqref{eq:TAU} of $G_1^{(T)}=G_1$ and~$\tau$~that 
\[
|I_1 \cap V_T| = \bigl|\A_T(G_1^{(T)})\bigr|=|\A_T(G_1)| =\lrceil{(1-\eps/2)\log_b(np)} =N.
\]
Since similarly $\A_n(G_i^{(T)}) \cap V_T = \A_T(G_i^{(T)})$, on the event $\{\tau=T\}$ 
it follows using Remark~\ref{rem:run}~that 
\[
I_i \cap V_T = \A_T(G_i^{(T)}) =\A_T(G_1^{(T)}) = I_1 \cap V_T \qquad \text{for all $1\le i\le m$,}
\]
so that $(I_1,\dots,I_m)$ satisfies property~\ref{def:Forbidden-Pattern:c} in Definition~\ref{def:Forbidden-Pattern}.
Finally, using~$E_T \subseteq E_n$, the $c$-restricted algorithm~$\cA$ ensures by~\eqref{eq:CondInformal} that
\[
\bigl|\tbinom{I_i}{2} \cap E_T\bigr| 
 \le \bigl|\tbinom{\A_n(G_i^{(\tau)})}{2} \cap E_n\bigr| \le c \log^2_b(np),
\]
so that $(I_1,\dots,I_m)$ satisfies property~\ref{def:Forbidden-Pattern:d} in Definition~\ref{def:Forbidden-Pattern}.
To sum up, on the event $\{\tau=T, \; \cS\}$ we showed that $(I_1,\dots,I_m)$ 
satisfies properties~\ref{def:Forbidden-Pattern:a}--\ref{def:Forbidden-Pattern:d} in Definition~\ref{def:Forbidden-Pattern}, 
which by definition of~$X_{m,T}$ establishes the desired inclusion~\eqref{eq:N-m-T-1} and thus completes the proof of Lemma~\ref{lem:inclusion} (as~discussed).
\end{proof}

The following proposition intuitively states that the right-hand side of~\eqref{eq:lem:inclusion} and thus the success probability $\Pr[\cS]$ is very small, provided that~$d$ and~$n$ are sufficiently large.  
We defer the proof of Proposition~\ref{prop:MAIN} to Section~\ref{pf:MAIN-OGP}, 
since it is rather tangential to the main argument here. 
\begin{proposition}\label{prop:MAIN}
There exist constants $d,n_1 >1$ (which may depend on~$\epsilon,c,m$)  
such that, for all $n \ge n_1$, any $c$-restricted algorithm $\A$ satisfies
\begin{equation}\label{eq:prop:main}
		\sum_{1\le T\le n}
		\mathbb{P}[X_{m,T} \ge 1] \le (np)^{-\log_b(np)/2}
\end{equation}
for all edge-probabilities $p=p(n)$ satisfying $d/n \le p \le 1-n^{-1/d}$.
\end{proposition}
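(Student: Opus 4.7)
The plan is a first-moment argument. Writing $L := \log_b(np)$, $M := \lceil(1+\epsilon)L\rceil$, $N := \lceil(1-\epsilon/2)L\rceil$, and $k := M - N$, I will bound $\mathbb{P}[X_{m,T}\ge 1]\le \mathbb{E}[X_{m,T}^{=M}]$ by a quantity uniform in~$T$, where $X_{m,T}^{=M}$ counts tuples as in Definition~\ref{def:Forbidden-Pattern} but with $|I_i|=M$ exactly, so the outer sum over~$T$ contributes only a factor~$n$. The reduction to $|I_i|=M$ is immediate: if $(I_1,\dots,I_m)$ satisfies properties~\ref{def:Forbidden-Pattern:a}--\ref{def:Forbidden-Pattern:d}, replacing each $I_i$ by $J\cup K_i'$ with $K_i'\subseteq I_i\setminus V_T$ of size $M-N$ preserves all three properties, since independence, the intersection-with-$V_T$ condition \ref{def:Forbidden-Pattern:c}, and the bound~\ref{def:Forbidden-Pattern:d} are all closed under restricting the ``tail'' $K_i$.

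The main estimate is obtained by conditioning on the first-$T$-round transcript $\Xi_T$ (the edge-statuses of all pairs in $\binom{V_T}{2}\cup E_T$). Given $\Xi_T$, the sets $V_T,E_T,\tau$ are determined, and by~\eqref{def:GiT:late} the graphs $G_1^{(T)},\dots,G_m^{(T)}$ share their edges on $\binom{V_T}{2}\cup E_T$ while their remaining edges are i.i.d.~$\Ber(p)$, mutually independent across~$i$. For any fixed tuple $(J,K_1,\dots,K_m)$ compatible with~\ref{def:Forbidden-Pattern:c} (so $J\subseteq V_T$ of size~$N$, $K_i\subseteq [n]\setminus V_T$ of size~$k$) and $I_i:=J\cup K_i$, property~\ref{def:Forbidden-Pattern:c} forces $\binom{I_i}{2}\cap\binom{V_T}{2}=\binom{J}{2}$ and \ref{def:Forbidden-Pattern:d} gives $|\binom{I_i}{2}\cap E_T|\le cL^2$, so
\[
f_i:=\bigabs{\tbinom{I_i}{2}\setminus (\tbinom{V_T}{2}\cup E_T)}\;\ge\; \tbinom{M}{2}-\tbinom{N}{2}-cL^2.
\]
Conditional independence across~$i$ then yields $\mathbb{P}[\text{all } I_i \text{ independent in } G_i^{(T)}\mid \Xi_T]\le \prod_i(1-p)^{f_i}$, and summing over at most $\binom{n}{N}\binom{n}{k}^m$ compatible tuples (a bound uniform in $T$ and $\Xi_T$) gives
\[
\mathbb{E}[X_{m,T}^{=M}]\;\le\; \binom{n}{N}\binom{n}{k}^m\,(1-p)^{m[\binom{M}{2}-\binom{N}{2}-cL^2]}.
\]

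What remains is a direct estimate of this bound. Using $\binom{M}{2}-\binom{N}{2}=\binom{k}{2}+Nk$ and $(1-p)=b^{-1}$, I factor the right-hand side as $\binom{n}{N}\cdot \eta^m$ with $\eta:=\binom{n}{k}\,b^{-[\binom{k}{2}+Nk-cL^2]}$. Applying $\log_b\binom{n}{j}\le j\log_b(en/j)$ and $\log_b(n)=L+\log_b(1/p)$, the dominant $L^2$-contributions work out to $\log_b\binom{n}{N}\lesssim (1-\epsilon/2)L^2$ and, inside the bracket of $\log_b\eta$,
\[
k(L-N)-\tbinom{k}{2}+cL^2 \;\approx\; \Bigpar{\tfrac{3\epsilon^2}{4}-\tfrac{9\epsilon^2}{8}+\tfrac{\epsilon^2}{8}}L^2 \;=\; -\tfrac{\epsilon^2}{4}L^2,
\]
from $k\approx(3\epsilon/2)L$, $N\approx(1-\epsilon/2)L$, and $c=\epsilon^2/8$. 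Multiplying by $m\ge 16/\epsilon^2$ yields $m\log_b\eta \le -4L^2$ (plus $o(L^2)$ residues), which dwarfs the $+L^2$ from $\log_b\binom{n}{N}$, and therefore $\mathbb{E}[X_{m,T}^{=M}]\le (np)^{-3L(1-o(1))}$; hence $n\cdot (np)^{-3L(1-o(1))}\le (np)^{-L/2}$ for~$n$ large enough, since $\log_b n=L+\log_b(1/p)\le \tfrac{5}{2}L^2$ eventually.

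The main obstacle is controlling the lower-order residues such as $k\log_b(1/(kp))$, $N/\log b$, and $k/\log b$ uniformly across the full probability range $d/n\le p\le 1-n^{-1/d}$, because the ratio $\log_b(n)/L=1+\log_b(1/p)/L$ is $O(1)$ in the dense regime but unbounded in the sparse regime. Lemma~\ref{lemma:Logb-LowerBd} is the key tool here: it gives $pL\ge c_0\log d$, so taking $d$ sufficiently large (depending on~$\epsilon$) forces $pk\ge 1$ and thereby makes $k\log_b(1/(pk))\le 0$ harmless, while the Stirling residues $N/\log b$ and $k/\log b$ are each at most $L^2/\log d$ and thus absorbed by the $-4L^2$ savings. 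This uniformity is what permits a single choice of constants $d,n_1$ to handle the full range of edge-probabilities simultaneously.
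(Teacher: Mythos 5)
Your proposal is correct and takes a genuinely different, somewhat cleaner route than the paper. The paper splits $X_{m,T}=Y_{m,T}+Z_{m,T}$ according to whether $\max_i|I_i|$ exceeds~$3L$: it disposes of the ``large'' part $Y_{m,T}$ by a bare union bound over independent sets of size~$\lceil 3L\rceil$, and then handles $Z_{m,T}$ by partitioning over all size vectors $\boldsymbol{\alpha}\in[1+\epsilon,3]^m$ with $\boldsymbol{\alpha}L\in[n]^m$, summing $\EE[Z_{\boldsymbol{\alpha},m,T}]$ over at most $n^m$ choices of~$\boldsymbol{\alpha}$, and finally minimizing a function $\Psi(\boldsymbol{\alpha})$ whose minimum is attained at $\alpha_i=1+\epsilon$. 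Your observation that properties~\ref{def:Forbidden-Pattern:a}--\ref{def:Forbidden-Pattern:d} are downward-closed in each $I_i\setminus V_T$ (while keeping $I_i\cap V_T$ fixed) short-circuits both of these steps: you can restrict attention to tuples of exactly the extremal size $M=\lceil(1+\epsilon)L\rceil$ a priori, which eliminates the $Y$/$Z$ split and the summation over $\boldsymbol{\alpha}$ in one stroke, and replaces the paper's $n^{m+1}$ polynomial overhead by a single factor of~$n$ from the $T$-sum. Your monotone-shrinking reduction is valid: independence is downward closed, the $c$-goodness bound in~\ref{def:Forbidden-Pattern:d} is downward closed, and since you keep $J=I_1\cap V_T$ intact and discard vertices only from $I_i\setminus V_T$ (which has at least $k=M-N$ elements because $|I_i|\ge M$), property~\ref{def:Forbidden-Pattern:c} is preserved verbatim. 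The key structural ingredients you reuse---conditioning on the transcript $\Xi_T$ so that the graphs $G_i^{(T)}$ become mutually independent outside $\binom{V_T}{2}\cup E_T$, the lower bound $|F_i|\ge\binom{M}{2}-\binom{N}{2}-cL^2$ on forbidden pairs, and the use of Lemma~\ref{lemma:Logb-LowerBd} to control the residues uniformly over $d/n\le p\le 1-n^{-1/d}$---are exactly those that drive the paper's proof, so there is no loss in tightness: both arguments give constants $c,\xi\asymp\epsilon^2$. The only minor quibble: some of your residue bookkeeping (e.g.\ ``$N/\log b\le L^2/\log d$'') is more pessimistic than necessary, since once $d$ is large enough that $Np\ge e$ and $kp\ge e$ the corresponding $\log_b(e/(Np))$- and $\log_b(e/(kp))$-type corrections are actually nonpositive; this does not affect correctness.
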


\subsection{Completing the proof of Theorem~\ref{thm:MAIN}}\label{sec:put:together} 
Assuming Proposition~\ref{prop:MAIN}, we are now ready to complete the proof of Theorem~\ref{thm:MAIN}. 
Indeed, combining estimate~\eqref{eq:E-s-A1-m} from Lemma~\ref{lem:comparison} with estimates~\eqref{eq:lem:inclusion} and~\eqref{eq:prop:main} from Lemma~\ref{lem:inclusion} and Proposition~\ref{prop:MAIN}, using the definitions~\eqref{eq:p-c:m} of~$m=\ceil{16/\eps^2}$ and~$\xi = \eps^2/64$ it follows (for sufficiently large~$d$ and~$n$) that, say, 
\begin{equation}\label{eq:comparison}
\Pr[\cE] \le \Pr\bigl[\cS\bigr]^{1/m} \le (np)^{-\log_b(np)/(2m)} \le (np)^{-\xi\log_b(np)}.
\end{equation}
Since~$\cE$ denotes the event that ${\bigl|\cA(G_1)\bigr|} \ge {(1+\epsilon)\log_b(np)}$, using that $G_1 \sim \G(n,p)$ by~\eqref{eq:G_1} we also have
\begin{equation}\label{eq:comparison:2}
\Pr[\cE] = \mathbb{P}_{G\sim \G(n,p)}\bigl[|\A(G)|\ge (1+\epsilon)\log_b(np)\bigr] . 
\end{equation}
Combining~\eqref{eq:comparison}--\eqref{eq:comparison:2} with~\eqref{eq:deterministic-A} therefore establishes the desired probability bound~\eqref{eq:thm:MAIN}. 
To complete the proof of Theorem~\ref{thm:MAIN}, it remains to prove Proposition~\ref{prop:MAIN}, which is the subject of the next~section.

\section{Unlikely independent sets: proof of Proposition~\ref{prop:MAIN}}\label{pf:MAIN-OGP}
In this section we give the deferred proof of Proposition~\ref{prop:MAIN}, 
which shows that a certain family of `overlapping' independent sets only exists with very small probability. 
To avoid clutter, we set 
\begin{equation}\label{def:L}
L :=\log_b(np), 
\qquad \ell := \log(np), 
\qquad \gamma:=1-\epsilon/2, 
\quad \text{ and } \quad N:=\lrceil{\gamma L},
\end{equation}
and henceforth always tacitly assume that $n$ is sufficiently large whenever necessary (how large may depend on $\epsilon,c,m,d$).

We claim that, to complete the proof of Proposition~\ref{prop:MAIN},  
it suffices to show that there is a constant~${d_0 >1}$ (which may depend on~$\epsilon,c,m$)
such that, for all $d \ge d_0$, we have
\begin{align}
\label{eq:prop:main:large}
\max_{1 \le T \le n}\Pr[Y_{m,T} \ge 1] & \;\le\; m \cdot (np)^{-L},\\
\label{eq:prop:main:typical}
\max_{1 \le T \le n}\Pr[Z_{m,T} \ge 1] & \;\le\; n^{m} \cdot (np)^{-L},
\end{align}
for all edge-probabilities $p=p(n)$ satisfying $d/n \le p \le 1-n^{-1/d}$, 
where $Y_{m,T}$ and $Z_{m,T}$ count the number of $m$-tuples $(I_1,\dots,I_m)$ in Definition~\ref{def:Forbidden-Pattern} with ${\max_{1\le i\le m}|I_i|} > {3L}$ and  ${\max_{1\le i\le m}|I_i|} \le {3L}$, respectively. 
To see this claim, note that by combining $X_{m,T}=Y_{m,T} + Z_{m,T}$ with estimates~\eqref{eq:prop:main:large}--\eqref{eq:prop:main:typical}, 
using the definition~\eqref{def:L} of~$L=\log_b(np)$ it follows (for sufficiently large~$d$ and~$n$) that, say, 
\begin{equation}
\label{eq:prop:main:proof:1}
\begin{split}
		\sum_{1\le T\le n} \mathbb{P}[X_{m,T} \ge 1] 
		& \le \sum_{1\le T\le n} \Bigpar{\Pr[Y_{m,T} \ge 1]  + \Pr[Z_{m,T} \ge 1]}  \\
		& \le n \cdot 2mn^{m} \cdot (np)^{-\log_b(np)} .
\end{split}
\end{equation}
Using the lower bound~\eqref{eq:logbnp:estimate} for~$\log_b(np)$, 
by distinguishing the cases~$p \ge n^{-1/2}$ and $d/n \le p \le n^{-1/2}$, say, 
it is routine to see that for sufficiently large $d$ and $n$ (depending on $m,c_0$) we have 
\begin{equation}\label{eq:prop:main:proof:2}
(np)^{-\log_b(np)/2} \le (np)^{- c_0\log(d)/(2p)} \le \frac{1}{2mn^{m+1}},
\end{equation}
which together with~\eqref{eq:prop:main:proof:1} readily proves estimate~\eqref{eq:prop:main} and thus Proposition~\ref{prop:MAIN}, as~claimed. 

The remainder of this section is devoted to the proof of the probability estimates~\eqref{eq:prop:main:large} and~\eqref{eq:prop:main:typical}, 
for which we shall combine random graph and optimization arguments.

\subsection{Proof of estimate~\eqref{eq:prop:main:large}: large independent sets}
Fix~$1 \le T \le n$. 
Recall that $Y_{m,T}$ counts the number of $m$-tuples $(I_1,\dots,I_m)$ in Definition~\ref{def:Forbidden-Pattern} with ${\max_{1\le i\le m}|I_i|}> {3L}$. 
Note that $Y_{m,T} \ge 1$ implies existence of $1 \le i \le m$ such that $G_i^{(T)}$ contains an independent set~$I$ of size 
\[
k := \lrceil{3L},
\] 
where~$G_i^{(T)} \sim \G(n,p)$ by Remark~\ref{rem:marginals}. 
Using estimate~\eqref{eq:logbnp:estimate} and $L =\log_b(np)$, note that for sufficiently large~$d$ we have $ne/k \le np$ and $(k-1)/2 \ge 4L/3$, say, 
so that a standard union bound argument (over all possible choices of random graphs~$G_i^{(T)}$ and independent sets~$I$) and the well-known binomial coefficient estimate $\binom{n}{k} \le (ne/k)^k$~yield
\begin{equation}\label{eq:ISET:estimate}
\begin{split}
\Pr[Y_{m,T} \ge 1]
& \le \sum_{1 \le i \le m} \sum_{I \subseteq [n]:|I|=k}(1-p)^{\binom{k}{2}} \\
& \le m \cdot \binom{n}{k}(1-p)^{\binom{k}{2}} \le m \cdot \biggpar{\frac{ne}{k} \cdot (1-p)^{(k-1)/2}}^k \\
& \le m \cdot \Bigpar{np \cdot (np)^{-4/3} }^k \le m \cdot (np)^{-L} ,
\end{split}
\end{equation}
completing the proof of estimate~\eqref{eq:prop:main:large}.

\subsection{Proof of estimate~\eqref{eq:prop:main:typical}: overlapping independent sets}
Fix~$1 \le T \le n$. 
Recall that  $Z_{m,T}$ counts the number of $m$-tuples $(I_1,\dots,I_m)$ in Definition~\ref{def:Forbidden-Pattern} with ${\max_{1\le i\le m}|I_i|} \le {3L}$. 
Given $\boldsymbol{\alpha}=(\alpha_1,\dots,\alpha_m) \in [1+\epsilon,3]^m$, we introduce the auxiliary random variable $Z_{\boldsymbol{\alpha},m,T}$ that counts the number of all $(I_1,\dots,I_m)$ arising in Definition~\ref{def:Forbidden-Pattern} with the additional constraint that $|I_i|=\alpha_i L$ for all $i \in [m]$. 
Note that $\boldsymbol{\alpha}L \in\mathbb{N}^m$ necessarily, and that in fact $\boldsymbol{\alpha}L \in [n]^m$ holds.
Hence 
\begin{equation}\label{eq:Zmt}
    Z_{m,T} = \sum_{\substack{\boldsymbol{\alpha} \in[1+\epsilon,3]^m : \\ \boldsymbol{\alpha}L\in[n]^m}} Z_{\boldsymbol{\alpha},m,T}.
\end{equation}
Using Markov's inequality and linearity of expectation, we thus infer that 
\begin{equation}\label{eq:Pr:eq:Zmt}
\Pr[Z_{m,T} \ge 1] \le \EE[Z_{m,T}]  = \sum_{\substack{\boldsymbol{\alpha} \in[1+\epsilon,3]^m: \\ \boldsymbol{\alpha}L\in[n]^m}} \EE[Z_{\boldsymbol{\alpha},m,T}].
\end{equation}
We shall bound $\EE[Z_{\boldsymbol{\alpha},m,T}]$ in the following subsections, using counting and probability estimates.

\subsubsection{Counting estimates} 
We start with the following counting estimate for the number of $m$-tuples $(I_1,\dots,I_m)$ with specific sizes. 
Recall that 
$V_T = \{v_1, \ldots, v_T\}$ denotes the first~$T$ vertices considered by the algorithm~$\cA$, 
which we assume to be known in Lemma~\ref{lemma:Count-Est} below.
\begin{lemma}\label{lemma:Count-Est}
For sufficiently large~$d$ the following is true for any $\boldsymbol{\alpha}=(\alpha_1,\dots,\alpha_m) \in[1+\epsilon,3]^m$.
Given~$V_T \subseteq [n]$,  
the total number of $(I_1,\dots,I_m)$ with~$I_1,\dots,I_m\subseteq [n]$ that satisfy property~\ref{def:Forbidden-Pattern:c} in Definition~\ref{def:Forbidden-Pattern} and $|I_i|=\alpha_iL$ for all $1 \le i \le m$ is at most
\begin{equation}\label{eq:COUNTING_TERM}
\exp\biggpar{\ell L \cdot \Bigsqpar{\sum_{1\le i \le m}\bigpar{\alpha_i-\gamma}+\gamma}}.
\end{equation}
\end{lemma}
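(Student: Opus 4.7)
The plan is to perform a direct counting argument. Any tuple $(I_1,\dots,I_m)$ satisfying the intersection condition~\ref{def:Forbidden-Pattern:c} can be parameterized by its common intersection $J := I_1 \cap V_T = \cdots = I_m \cap V_T \subseteq V_T$ of size $N$, together with the $m$ private tails $I_i \setminus V_T \subseteq [n]\setminus V_T$ of sizes $|I_i|-N = \alpha_i L - N$ (the condition $I_i \cap V_T = J$ forces $I_i \setminus J \subseteq [n]\setminus V_T$). Counting each piece independently yields the upper bound
\[
\binom{T}{N}\prod_{i=1}^m \binom{n-T}{\alpha_i L - N} \;\le\; \binom{n}{N}\prod_{i=1}^m \binom{n}{\alpha_i L - N}.
\]

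Applying the standard estimate $\binom{n}{k}\le (en/k)^k$ and passing to logarithms, it suffices to establish
\[
N \log(en/N) + \sum_{i=1}^m (\alpha_i L - N) \log\bigl(en/(\alpha_i L - N)\bigr) \;\le\; \ell L \Bigl[\sum_{i=1}^m (\alpha_i-\gamma) + \gamma\Bigr].
\]
The key step is the inequality $\log(en/k) \le \ell = \log(np)$, which is equivalent to $k \ge e/p$. I would verify this for the two relevant sizes: for $k = N = \lceil \gamma L\rceil \ge \gamma L$, Lemma~\ref{lemma:Logb-LowerBd} gives $Lp \ge c_0 \log d$, so $\gamma L \ge \gamma c_0 \log(d)/p \ge e/p$ once $d$ is sufficiently large (depending on $\eps$); for $k = \alpha_i L - N$, the lower bound $\alpha_i - \gamma \ge (1+\eps) - (1-\eps/2) = 3\eps/2$ yields $k \ge (3\eps/2) L - 1$, which similarly exceeds $e/p$ for $d$ sufficiently large.

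Granted these estimates, each term is at most $\ell$ times its associated size, and summing gives
\[
\ell\Bigl(N + \sum_{i=1}^m (\alpha_i L - N)\Bigr) = \ell\Bigl(L\sum_{i=1}^m\alpha_i - (m-1)N\Bigr) \;\le\; \ell L\Bigl(\sum_{i=1}^m \alpha_i - (m-1)\gamma\Bigr) = \ell L \Bigl[\sum_{i=1}^m(\alpha_i-\gamma) + \gamma\Bigr],
\]
where the inequality uses $N \ge \gamma L$ and $m - 1 \ge 0$. The main subtlety to handle carefully is the ceiling $N = \lceil \gamma L\rceil$, which slightly shrinks $\alpha_i L - N$ and could weaken the inequality $k \ge e/p$; however, since $\alpha_i - \gamma$ is bounded below by the positive constant $3\eps/2$, this loss is absorbed into the lower bound on $L$ from Lemma~\ref{lemma:Logb-LowerBd} by choosing $d$ (and hence $n$) sufficiently large.
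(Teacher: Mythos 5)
Your proof is correct and follows essentially the same route as the paper's: choose the common intersection $I_1\cap V_T$, then the $m$ private tails, bound each binomial coefficient by $(en/k)^k \le (np)^k$ using $k \ge e/p$ (which the paper phrases via $\min\{\gamma,\alpha_i-\gamma\}\ge\eps/2$ and the lower bound on $L$ from Lemma~\ref{lemma:Logb-LowerBd}), and finally use $N=\lceil\gamma L\rceil\ge\gamma L$ to obtain the stated exponent. The only cosmetic difference is that you make the threshold $k\ge e/p$ and the ceiling bookkeeping explicit, where the paper refers back to the analogous estimate~\eqref{eq:ISET:estimate}.
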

\begin{proof}
To bound the total number of $(I_1,\dots,I_m)$, we will describe a way to construct all such tuples (and potentially more tuples), and bound the number of choices in our construction from above. 
We first choose the set $I_1 \cap V_T$ of ${N=\lrceil{\gamma L}}$ common vertices, which  
using~$V_T \subseteq [n]$ can be done in at most   
\begin{equation*}
\binom{|V_T|}{N} \le \binom{n}{\lrceil{\gamma L}} 
\end{equation*}
many ways. 
Given~$I_1 \cap V_T$ of size~$|I_1 \cap V_T|=N$, note that property~\ref{def:Forbidden-Pattern:c} in Definition~\ref{def:Forbidden-Pattern} enforces $I_i\cap V_T=I_1 \cap V_T$ for each~$1 \le i \le m$. 
With this in mind, we next choose the remaining vertices of the sets~$I_i$ of size $|I_i|=\alpha_i L$, 
which using $|I_1 \cap V_T|=N=\lrceil{\gamma L}$ can be done in at most 
\[
\prod_{1\le i\le m} \binom{n}{\alpha_i L-\lrceil{\gamma L}}  
\]
many ways.
Using the standard estimate $\binom{n}{k} \le (ne/k)^k$, 
it follows that the total number of $(I_1,\dots,I_m)$ that satisfy property~\ref{def:Forbidden-Pattern:c} in Definition~\ref{def:Forbidden-Pattern} and $|I_i|=\alpha_iL$ for all $1 \le i \le m$ is at most
\begin{equation*}
\binom{n}{\lrceil{\gamma L}} \cdot \prod_{1\le i\le m} \binom{n}{\alpha_i L-\lrceil{\gamma L}} \le \biggpar{\frac{ne}{\lrceil{\gamma L}}}^{\lrceil{\gamma L}} \cdot \prod_{1\le i\le m} \biggpar{\frac{ne}{\alpha_i L-\lrceil{\gamma L}}}^{\alpha_i L-\lrceil{\gamma L}} ,
\end{equation*}
which in turn, by noting $\min\{\gamma, \: \alpha_i-\gamma\} \ge \min\{1-\eps/2, \:  3\eps/2\} \ge \eps/2$ and using estimates similarly to~\eqref{eq:ISET:estimate}, is for sufficiently large~$d$ routinely seen to be at~most, say, 
\begin{equation*}
\begin{split}
\bigpar{np}^{\lrceil{\gamma L}} \cdot \prod_{1\le i\le m} \bigpar{np}^{\alpha_i L -\lrceil{\gamma L}} & = \bigpar{np}^{\sum_{1 \le i \le m}\alpha_i L - (m-1) \lrceil{\gamma L}},
\end{split}
\end{equation*}
which using $np = e^{\ell}$ readily completes the proof of Lemma~\ref{lemma:Count-Est}. 
\end{proof}

\subsubsection{Probability estimates}%
We next prove the following probability estimate. 
Similar as in the proof of Lemma~\ref{lem:comparison}, we henceforth let~$\Xi_T$ encode the edge-status of all vertex-pairs~$\{u,v\} \in \tbinom{V_T}{2} \cup E_T$, 
i.e., the edge-status of all vertex-pairs revealed during the first~$T$ rounds of the algorithm~$\cA$. 
Recalling that the algorithm~$\cA$ is deterministic, note that~$V_T = \{v_1, \ldots, v_T\}$ and~$E_T = S_1\cup \cdots \cup S_T$ are both determined by~$\Xi_T$, 
where~$E_T$ contains the additional future vertex-pairs revealed in~(b) during the first~$T$ rounds of~$\cA$. 
%
\begin{lemma}\label{lemma:Prob-Est}
For sufficiently large~$d$ the following is true for any $\boldsymbol{\alpha}=(\alpha_1,\dots,\alpha_m) \in[1+\epsilon,3]^m$. 
For every $m$-tuple $(I_1,\dots,I_m)$ with~$I_1,\dots,I_m\subseteq [n]$ 
that satisfies properties~\ref{def:Forbidden-Pattern:c} and~\ref{def:Forbidden-Pattern:d} in Definition~\ref{def:Forbidden-Pattern}
as well as $|I_i|=\alpha_iL$ for all~$i \in [m]$,
we have 
\begin{equation}\label{eq:lemma:Prob-Est}
\begin{split}
&\mathbb{P}\Bigl[\text{$I_i$ is an independent set in $G_i^{(T)}$ for all $1 \le i \le m$} \: \Big| \: \Xi_T\Bigr]  \\
    & \quad \le \exp\biggpar{-\ell L \cdot \sum_{1\le i\le m}\Bigpar{\alpha_i^2/2- \gamma^2/2-2c}}.
\end{split}
\end{equation}
\end{lemma}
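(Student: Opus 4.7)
The plan is to condition on the information $\Xi_T$ revealed during the first $T$ rounds (i.e., the edge-status of all pairs in $\binom{V_T}{2}\cup E_T$) and to exploit the conditional independence of the unrevealed edges across the correlated graphs $G_1^{(T)},\dots,G_m^{(T)}$. For each fixed $i\in[m]$, I partition $\binom{I_i}{2}$ into \emph{revealed} pairs (those lying in $\binom{V_T}{2}\cup E_T$) and \emph{unrevealed} pairs (the remainder). The event that all revealed pairs inside $I_i$ are non-edges of $G_i^{(T)}$ is $\Xi_T$-measurable; if it fails for some $i$, then the conditional probability in~\eqref{eq:lemma:Prob-Est} already vanishes and the bound is trivial, so I may assume it holds for every $i$.

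Otherwise, definitions~\eqref{def:GiT:early}--\eqref{def:GiT:late}, together with the fact that edges of $G_1\sim \G(n,p)$ are mutually independent, guarantee that conditional on $\Xi_T$ the edge-status of each unrevealed pair in each $G_i^{(T)}$ is an independent Bernoulli$(p)$ random variable, and that all of these variables are jointly mutually independent across pairs and across $i\in[m]$ (for $i=1$ this uses the product structure of $\G(n,p)$; for $i\ge 2$ it is built into the construction of the correlated family). Consequently,
\[
\mathbb{P}\Bigl[\text{$I_i$ is independent in $G_i^{(T)}$ for all $i\in[m]$} \;\Big|\; \Xi_T\Bigr] \;=\; \prod_{i=1}^m (1-p)^{U_i},
\]
where $U_i$ denotes the number of unrevealed pairs inside $I_i$. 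Property~\ref{def:Forbidden-Pattern:c} forces $I_i\cap V_T=I_1\cap V_T$ and hence $|\binom{I_i}{2}\cap \binom{V_T}{2}|=\binom{N}{2}$, while property~\ref{def:Forbidden-Pattern:d} gives $|\binom{I_i}{2}\cap E_T|\le cL^2$; a union bound on these two sets of revealed pairs yields
\[
U_i \;\ge\; \binom{\alpha_i L}{2}-\binom{N}{2}-cL^2.
\]

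To conclude, I would expand $\binom{\alpha_i L}{2}\ge \alpha_i^2 L^2/2 - \alpha_i L/2$ and $\binom{N}{2}\le \gamma^2 L^2/2 + \gamma L + 1/2$, observe that the resulting $O(L)$ slack is bounded uniformly in $i$ (since $\alpha_i\le 3$ and $\gamma\le 1$) by, say, $5L$, and absorb it into an additional $cL^2$ term; this is valid once $L\ge 5/c$ and thus for sufficiently large $d$. The net bound becomes $U_i\ge L^2\bigl[\alpha_i^2/2-\gamma^2/2-2c\bigr]$. Plugging in the identity $-\log(1-p)=\ell/L$ from~\eqref{def:L}, I rewrite $(1-p)^{U_i}=\exp(-U_i\ell/L)$; multiplying over $i\in[m]$ and summing the exponents yields exactly~\eqref{eq:lemma:Prob-Est}. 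The main (minor) obstacle is cleanly verifying the conditional independence across $i$ from the construction of $G_i^{(T)}$, after which the remainder is a careful bookkeeping of the lower-order terms so that everything fits into the $2c$ slack.
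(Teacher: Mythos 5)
Your proposal is correct and follows essentially the same route as the paper's proof: condition on $\Xi_T$, exploit the conditional independence of unrevealed pairs across the graphs $G_i^{(T)}$, lower-bound the count of unrevealed pairs inside each $I_i$ by subtracting $\binom{N}{2}$ (from property (ii)) and $cL^2$ (from property (iii)), and absorb the $O(L)$ slack into the extra $c$ in the exponent once $L\ge 5/c$, which holds for $d$ large by Lemma~\ref{lemma:Logb-LowerBd}. The only cosmetic difference is that you split off the $\Xi_T$-measurable event that some revealed pair is already an edge (giving conditional probability zero) and then write an equality, whereas the paper simply upper-bounds the independence event by the "forbidden edges" event $\cF$ and uses an inequality; the bookkeeping of lower-order terms matches up to the choice of constants ($5L$ versus $4L$).
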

\begin{proof}
Given a set $I_i\subseteq[n]$ as in the statement of the lemma, we define the set 
\begin{equation}\label{eq:forbidden:edges}
\begin{split}
  F_i & := \tbinom{I_i}{2} \setminus \Bigpar{\tbinom{V_T}{2} \cup E_T} ,
\end{split}
\end{equation}
which is determined by~$\Xi_T$ (since~$V_T$ and~$E_T$ are both determined by~$\Xi_T$). 
Focusing on the absence of the edges in the $F_i$ 
(ignoring the absence of other potential edges might seem wasteful, but later on we can compensate for this in~\eqref{eq:N-alpha-m-T}--\eqref{eq:PSI:2} by taking $m$ sufficiently large) 
it follows that 
\begin{align}\label{eq:E3-indep}
    \Pr\Bigl[\text{$I_i$ is an independent set in $G_i^{(T)}$ for all $1 \le i \le m$} \: | \: \Xi_T\Bigr] & \le \Pr[\cF \: | \: \Xi_T]
\end{align}
for the `forbidden edges' event
\begin{equation}\label{eq:forbidden:event}
\begin{split}
		\cF := \Bigcpar{e_{i,T}\bigpar{\{u,v\}}=0 \text{ for all $\{u,v\} \in F_i$ and $i \in [m]$}},
\end{split}
\end{equation}
where~$e_{i,T}\bigpar{\{u,v\}} \in \{0,1\}$ denotes the edge-status of the pair~$\{u,v\}$ in~$G_i^{(T)}$, see Section~\ref{sec:interpolation}.  
We next give lower bounds on the sizes of the sets~$F_i$ 
defined in~\eqref{eq:forbidden:edges}, 
exploiting that properties~\ref{def:Forbidden-Pattern:c} and~\ref{def:Forbidden-Pattern:d} in Definition~\ref{def:Forbidden-Pattern} ensure that $|I_i \cap V_T|=|I_1 \cap V_T|=N = \lrceil{ \gamma L}$ and $\bigl|\binom{I_i}{2} \cap E_T\bigr| \le c \log_b^2(np) = cL^2$. 
Namely, together with~$|I_i|= \alpha_i L$ and~$\alpha_i +\gamma \le 4$, 
it follows that we deterministically have, say, 
\begin{equation}\label{eq:Fi:bound}
\begin{split}
|F_i| & \ge \bigl|\tbinom{I_i}{2}\bigr| - \bigl|\tbinom{I_i \cap V_T}{2}\bigr| -  \bigl|\tbinom{I_i}{2} \cap E_T\bigr|\\
& \ge (\alpha_i L-1)^2/2 - (\gamma L+1)^2/2 - cL^2\\
& = L^2\Bigpar{\alpha_i^2/2 - \gamma^2/2-(\alpha_i+\gamma)/L - c}\\
& \ge L^2\Bigpar{\alpha_i^2/2 - \gamma^2/2-4/L - c}.
\end{split}
\end{equation}
Note that all vertex-pairs $\{u,v\} \in F_1 \cup \cdots \cup F_m$ satisfy~$\{u,v\} \not\in\tbinom{V_T}{2} \cup E_T$, 
which means that they have not been revealed during the first~$T$ rounds of the algorithm. 
By inspecting~\eqref{def:GiT:late} it therefore follows that, conditional on~$\Xi_T$, all the vertex-pairs in $F_{1} \cup \cdots \cup F_{m}$ are included as edges into each graph $G_1^{(T)}, \ldots, G_m^{(T)}$ independently with probability~$p$.  
In view of~\eqref{eq:forbidden:event} and the fact that~$F_i$ is determined by~$\Xi_T$, 
using the deterministic lower bound~\eqref{eq:Fi:bound} on~$|F_i|$ and $L=\log_b(np) = -\ell/\log(1-p)$, 
it thus follows that 
\begin{equation*}
\begin{split}
    \Pr[\cF \: | \: \Xi_T] 
		& = \prod_{1 \le i \le m}(1-p)^{|F_i|}\\
		& \le \prod_{1 \le i \le m}(1-p)^{L^2\bigpar{\alpha_i^2/2 - \gamma^2/2-4/L-c}} \\
		& = \exp\biggpar{-\ell L \cdot \sum_{1\le i\le m}\Bigpar{\alpha_i^2/2- \gamma^2/2-4/L-c}}.
\end{split}
\end{equation*}
Recalling that~$L=\log_b(np) \ge c_0 \log(d)$ by~\eqref{eq:logbnp:estimate}, for sufficiently large~$d$ (depending on~$c$) we infer that~${4/L \le c}$, 
which together with~\eqref{eq:E3-indep} completes the proof of estimate~\eqref{eq:lemma:Prob-Est} and thus  Lemma~\ref{lemma:Prob-Est}.
\end{proof}

\subsubsection{Completing the proof: putting everything together}
After these preparations we are now ready to bound $\mathbb{E}[Z_{\boldsymbol{\alpha},m,T}]$, 
by combining the number of possible~$(I_1,\dots,I_m)$ via the counting estimate from Lemma~\ref{lemma:Count-Est}, 
with the probability estimate from Lemma~\ref{lemma:Prob-Est}. 
Indeed, since the vertex set~$V_T$ is determined by~$\Xi_T$ (as mentioned above Lemma~\ref{lemma:Prob-Est}), 
by multiplying estimates~\eqref{eq:COUNTING_TERM} and~\eqref{eq:lemma:Prob-Est} it readily follows for any $\boldsymbol{\alpha}=(\alpha_1,\dots,\alpha_m) \in[1+\epsilon,3]^m$ with $\boldsymbol{\alpha}L \in [n]^m$ that 
\begin{equation}\label{eq:N-alpha-m-T}
    \EE[Z_{\boldsymbol{\alpha},m,T}]
		= \EE\bigsqpar{\EE[Z_{\boldsymbol{\alpha},m,T} \: | \: \Xi_T]} 
		\le \EE\Bigsqpar{\exp\Bigpar{-\ell L \cdot \Psi(\boldsymbol{\alpha})}}
		= \exp\Bigpar{-\ell L \cdot \Psi(\boldsymbol{\alpha})}
\end{equation}
for the function
\begin{equation}\label{eq:PSI}
\Psi(\boldsymbol{\alpha}) := \sum_{1\le i\le m}\Bigsqpar{\bigpar{\alpha_i^2/2-\alpha_i}-\bigpar{\gamma^2/2-\gamma}-2c} - \gamma.
\end{equation}
We now bound $\Psi(\boldsymbol{\alpha})$ from below.
Recalling that~$\gamma=1-\epsilon/2$, we have 
\[
\gamma^2/2-\gamma = \eps^2/8-1/2.
\]
Furthermore, by noting that the following minimum is attained for $\alpha_i=1+\epsilon$, we have 
\[
\min_{\alpha_i \in [1+\epsilon,3]}\bigpar{\alpha_i^2/2-\alpha_i} = \eps^2/2-1/2 .
\]
Inserting these two estimates into~\eqref{eq:PSI}, 
we now exploit that by definition~\eqref{eq:p-c:m} we have $2c \le \eps^2/4$ and $m \ge 16/\eps^2$, 
so that using~$\gamma \le 1$ we infer that  
\begin{equation}\label{eq:PSI:2}
\begin{split}
\Psi(\boldsymbol{\alpha}) & \ge m \cdot \bigpar{\eps^2/2-\eps^2/8- \eps^2/4} - \gamma 
 \ge m \cdot \eps^2/8 -1 \ge 1 .
\end{split}
\end{equation}

To sum up, since there are at most~$n^m$ choices of $\boldsymbol{\alpha}\in[1+\epsilon,3]^m$ for which $\boldsymbol{\alpha}L \in [n]^m$, 
by inserting~\eqref{eq:N-alpha-m-T} and~\eqref{eq:PSI:2} into~\eqref{eq:Pr:eq:Zmt} it follows that 
\begin{align*}
\Pr[Z_{m,T} \ge 1] \le \sum_{\substack{\boldsymbol{\alpha} \in[1+\epsilon,3]^m: \\ \boldsymbol{\alpha}L\in[n]^m}} \EE[Z_{\boldsymbol{\alpha},m,T}] \le n^m \cdot \exp\bigpar{-\ell L},
\end{align*}
which using~$\ell=\log(np)$ completes the proof of estimate~\eqref{eq:prop:main:typical} and thus Proposition~\ref{prop:MAIN}, as discussed.

\begin{remark}
One can check that the proofs of Proposition~\ref{prop:MAIN} and Theorem~\ref{thm:MAIN} remain valid when the fairly natural assumption ${d/n \le p \le 1-n^{-1/d}}$ 
is replaced by the more technical~assumption 
\[
C_0 \max\biggcpar{\frac{1}{Lp}, \: \sqrt{\frac{\max\{1, \log(n)/\ell\}}{L}}} \le \eps \le 1 ,
\]
where~$C_0> 0$ is a sufficiently large universal constant (that does not depend on~$\eps,c,\xi,m,n_0,\gamma$), 
and the parameters $L =\log_b(np)$ and $\ell = \log(np)$ are defined as in~\eqref{def:L}. 
In particular, for constant~$p \in (0,1)$ it follows that Theorem~\ref{thm:MAIN} remains valid when $2C_0/\sqrt{\log_b(np)} \le \eps \le 1$, say. 
\end{remark}

\section{Achievability result: proof of Theorem~\ref{thm:algorithm}}\label{sec:PF-Greedy-BruteForce}
In this section we prove our achievability result Theorem~\ref{thm:algorithm}, 
which shows that a certain online algorithm can find an independent set of size $(1 + \eps)\log_b(n)$ with high probability. 
Heuristically speaking, the algorithm we shall use consists of three phases: 
{\vspace{-0.25em}%
\begin{romenumerate}
\itemsep 0em \parskip 0.125em  \partopsep=0.125em \parsep 0.125em  
    \item\label{heur:algo:phase1} first we greedily find an independent set~$I_T$ of size~$|I_T| \approx(1-\eps)\log_b(n)$, 
    \item\label{heur:algo:phase2} then we iteratively find a vertex set~$R$ of size~$|R| \approx n^{\eps}$ that only contains vertices $w \not\in I_T$ which have no neighbors in~$I_T$, and 
    \item\label{heur:algo:phase3} finally we find using brute-force search an independent set $J \subseteq R$ of size approximately $|J| \approx 2\log_b(|R|) \approx 2\eps \log_b(n)$, 
		and construct an independent set~$I:=I_T \cup J$ of size~$|I| \approx (1+\eps)\log_b(n)$.
\end{romenumerate}}%

Turning to the details, we now describe a deterministic algorithm~$\cA$ that implements the above-mentioned three-phase idea in a way that fits it into the online-framework of Definition~\ref{def:Operation}. Setting 
\begin{equation}\label{def:parameters}
\ell:= \bigfloor{(1-\epsilon)\log_b(n) - 3.5 \log_b\log_b(n)},  
\qquad  
T := \ell \cdot \lrceil{n^{1-\eps} \ell},
\quad \text{ and } \quad 
r:= \lrfloor{n^{\eps}\log^3_b(n)} ,
\end{equation}
the algorithm~$\cA$ initially starts with $I_0:=\emptyset$, and then proceeds in rounds as follows.%
{\vspace{-0.25em}
\begin{itemize}
\itemsep 0em \parskip 0.125em  \partopsep=0.125em \parsep 0.125em  
    \item {\bf Greedy rounds:} In round $1 \le t \le T$ we consider vertex~$t \in [n]$, 
		and reveal the edge-status of all vertex-pairs~$\{u,t\}$ with~$u \in I_{t-1}$. 
    If either~$|I_{t-1}| \ge \ell$ or vertex~$t$ has a neighbor in $I_{t-1}$, then we set $I_t := I_{t-1}$.
		Otherwise, we set $I_t := I_{t-1}\cup \{t\}$.
    \item {\bf Search round:} In round $t=T+1$, we consider vertex~$n$ and set $I_{T+1} := I_{T}$. 
		In this round we use the set~$S_{T+1}$ to reveal the edge-status of all vertex-pairs $\{u,v\}$ with $u \in I_T$ and $v \in \{T+1, \ldots, n-2\}=:W$, 
		in order to determine the set~$R \subseteq W$ of vertices in~$W$ that have no neighbors in~$I_T$. 
		If~$|R| > r$, then we iteratively remove vertices from~$R$ (in lexicographic order, say) until the resulting set satisfies $|R| =r$. 
    \item {\bf Brute-force round:} 
		In round $t=T+2$, we consider vertex~$n-1$ and set $I_{T+2} := I_{T+1}$. 
		In this round we use the set~$S_{T+2}$ to reveal the edge-status of all vertex-pairs ${\{u,v\} \in \tbinom{R}{2}}$, 
		and then use brute-force search to find a largest independent set~$J \subseteq R$ of size at most~${|J| \le \lrfloor{2 \log_b(r)}}$, where we write~${J=\{w_1, \ldots, w_{|J|}\}}$. 
    \item {\bf Final rounds:} 
		In rounds $T+3 \le t \le T+2+|J|$, we consider vertex~$w_{t-(T+2)} \in [n]$ and set $I_{t} := I_{t-1} \cup \{w_{t-(T+2)}\}$.
		In the remaining rounds $T+3+|J| \le t \le n$ we sequentially consider the remaining vertices in $W \setminus J$, and each time set $I_{t} := I_{t-1}$.
\end{itemize}}%
\noindent
Note that algorithm~$\cA$ has running time \[
n^{\Theta(1)} + \tbinom{r}{\lrfloor{2 \log_b(r)}} \cdot n^{\Theta(1)} = n^{(2+o(1))\eps^2 \log_b(n)},
\]
where here and below we always tacitly assume that $n$ is sufficiently large whenever necessary (how large may depend on~${\eps,p}$). 
Furthermore, 
we have~$I_n = I_T \cup J$ and $E_n = {S_1\cup \cdots \cup S_n}={S_{T+1} \cup S_{T+2}}$, 
so using $|J| \le \lrfloor{2 \log_b(r)}$ and $|I_T| \le \ell$ the algorithm~$\cA$ deterministically satisfies, say, 
\begin{equation}
\begin{split}
\bigl|\tbinom{I_n}{2} \cap E_n\bigr| 
& \le |J| \cdot |I_T| + \tbinom{|J|}{2}\\
& \le 2 \log_b(r) \cdot \bigsqpar{\ell + \log_b(r)} \\
& \le \bigsqpar{2\eps\log_b(n) + 6\log_b\log_b(n)} \cdot \bigsqpar{\log_b(n)-0.5\log_b\log_b(n)} \\
& \le \biggpar{2\eps + \frac{6\log_b \log_b(n)}{\log_b(n)}} \log^2_b(np) \le 3\eps \log^2_b(np),
\end{split}
\end{equation}
so that~$\cA$ is $3\eps$-restricted according to~\eqref{eq:CondInformal}. 
(The total number $|E_n| \le n \cdot |I_T|+ \tbinom{|R|}{2} \le n^{\min\{1,2\eps\}+o(1)}$ of queried future vertex-pairs is larger.) 
Note that the algorithm~$\cA$ finds an independent set of size~$|I_n|=|I_T|+|J|$. 
To complete the proof of Theorem~\ref{thm:algorithm}, it thus suffices to prove the following~lemma. 
\begin{lemma}\label{lem:bruteforce}
We have $\Pr[|I_T|+|J| \ge (1+\eps)\log_b(n)] \to 1$ as~$n \to \infty$.
\end{lemma}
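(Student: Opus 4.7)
The plan is to track the three phases of~$\cA$ separately and to exploit that the edges inspected in each phase are disjoint, which makes the phases conditionally independent. Specifically, the greedy phase only inspects edges inside $V_T=\{1,\ldots,T\}$, the search round only inspects edges between $I_T$ and $W=\{T+1,\ldots,n-2\}$, and the brute-force round only inspects edges inside $R\subseteq W$. Hence, after conditioning on the output of one phase, the edges relevant to the next phase remain i.i.d.~Bernoulli$(p)$, and I can treat the three phases in sequence.

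\textbf{Phase 1 (Greedy).} I would first show that $|I_T|=\ell$ w.h.p. Let $\tau_k$ be the first round $t\le T$ with $|I_t|=k$ and $\tau_0:=0$. Conditional on $|I_{t-1}|=k-1<\ell$, vertex $t$ is added with probability $(1-p)^{k-1}=b^{-(k-1)}$, independently of prior rounds, so $\tau_k-\tau_{k-1}$ is stochastically dominated by a geometric random variable with mean $b^{k-1}$. Since by~\eqref{def:parameters} we have $T/\ell \ge n^{1-\eps}\ell$ while $b^{k-1}\le b^\ell \le n^{1-\eps}/\log_b^{3.5}(n)$, the geometric tail bound together with a union bound over $1\le k\le \ell$ should yield $\Pr[\tau_\ell>T]=o(1)$.

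\textbf{Phase 2 (Search).} Next, I would condition on $I_T$ of size $\ell$. Since the edges from $I_T$ to $W$ are never inspected during Phase~1, the pre-truncation size of $R$ is $\mathrm{Binomial}(|W|,b^{-\ell})$-distributed, whose mean $\Omega(n^\eps\log_b^{3.5}(n))$ exceeds $r$ by a $\log_b^{1/2}(n)$-factor. A standard Chernoff bound should then give that the pre-truncation $|R|\ge r$ w.h.p., so after truncation $|R|=r$ exactly.

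\textbf{Phase 3 (Brute-force).} Conditional on the truncated set $R$ of size $r$, the edges inside $R$ are fresh, so $G[R]\sim\G(r,p)$. The classical Matula-type concentration~\cite{matula1970complete,matula1976largest,grimmett1975colouring,bollobas1976cliques} gives $\alpha(G[R])=2\log_b(r)-2\log_b\log_b(r)+O(1)$ w.h.p.; in particular $\alpha(G[R])\ge 2\log_b(r)-2\log_b\log_b(r)-O(1)$ and $\alpha(G[R])\le \lrfloor{2\log_b(r)}$ hold jointly w.h.p., so the brute-force step returns $|J|\ge 2\log_b(r)-2\log_b\log_b(r)-O(1)$ w.h.p. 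Combining the three estimates, together with $\ell\ge (1-\eps)\log_b(n)-3.5\log_b\log_b(n)-1$ and $2\log_b(r)\ge 2\eps\log_b(n)+6\log_b\log_b(n)-O(1)$, yields
\[
|I_T|+|J|\ge (1+\eps)\log_b(n)+0.5\log_b\log_b(n)-O(1)\ge (1+\eps)\log_b(n)
\]
for $n$ large, which completes the proof. The main obstacle is bookkeeping: one must ensure that the logarithmic margins encoded in the choice of $\ell$ and $r$ in~\eqref{def:parameters} are large enough to absorb the $O(\log_b\log_b(n))$-scale fluctuations coming from the Matula-type result in Phase~3, which is precisely what the specific constants $3.5$ and $3$ in~\eqref{def:parameters} are engineered to guarantee.
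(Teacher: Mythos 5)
Your proposal is correct and follows essentially the same route as the paper: the paper isolates your Phases 1 and 2 (the greedy phase succeeding with $|I_T|=\ell$, and the binomial/Chernoff argument giving $|R|=r$) as a separate auxiliary Lemma~\ref{lem:properties}, while you inline them, but the geometric/Chernoff estimates, the conditional-independence-across-phases observation, the Matula-type concentration for $\alpha(G[R])$ with $G[R]\sim\G(r,p)$, and the final bookkeeping against the constants $3.5$ and $3$ in~\eqref{def:parameters} are all the same. One minor point: you do not actually need the upper bound $\alpha(G[R]) \le \lfloor 2\log_b(r)\rfloor$ to hold w.h.p.\ — since $|J|=\min\{\alpha(G[R]),\lfloor 2\log_b(r)\rfloor\}$, the lower bound on $\alpha(G[R])$ alone gives $|J|\ge 2\log_b(r)-2\log_b\log_b(r)-O(1)$ for large~$n$, which is what the paper uses.
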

The proof of Lemma~\ref{lem:bruteforce} uses the following auxiliary result, 
whose proof we defer.  
\begin{lemma}\label{lem:properties}
We have $\Pr[|I_T|=\ell, \: |R|=r] \to 1$ as~$n \to \infty$.
\end{lemma}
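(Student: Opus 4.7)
The plan is to prove Lemma~\ref{lem:properties} by decomposing the joint event $\{|I_T|=\ell,\:|R|=r\}$ via the chain rule. Observe first that the trimming step in the search round makes $\{|R|=r\}$ equivalent to $\{|R'|\ge r\}$, where $R'\subseteq W:=\{T+1,\ldots,n-2\}$ denotes the un-trimmed set of vertices in $W$ having no neighbor in $I_T$. Hence it suffices to show (i) $\Pr[|I_T|=\ell]=1-o(1)$, and (ii) $\Pr[|R'|\ge r\mid I_T]=1-o(1)$ uniformly over realizations of $I_T$ of size~$\ell$.

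For step~(i), I would analyze the greedy phase as a sum of independent geometric waiting times. Since the greedy rounds only reveal edges of the form $\{u,t\}$ with $u\in I_{t-1}$, by deferred decisions each attempt to enlarge $I$ from size $k<\ell$ to size $k+1$ succeeds independently with probability $(1-p)^k=b^{-k}$. Letting $\tau_\ell$ denote the first round at which $|I_t|=\ell$, it follows that $\tau_\ell=\sum_{k=0}^{\ell-1}G_k$ with independent $G_k\sim\mathrm{Geometric}(b^{-k})$, so $\E{\tau_\ell}=\sum_{k=0}^{\ell-1}b^k=O(b^\ell)=O(n^{1-\epsilon}/\log_b^{3.5}(n))$ by the definition~\eqref{def:parameters} of $\ell$. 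Since $T\ge \ell^2 n^{1-\epsilon}=\Omega(n^{1-\epsilon}\log_b^2(n))$, Markov's inequality yields
\[
\Pr[|I_T|<\ell]=\Pr[\tau_\ell>T]\le\frac{\E{\tau_\ell}}{T}=O\bigpar{\log_b^{-11/2}(n)}=o(1).
\]

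For step~(ii), I would invoke deferred decisions a second time: conditional on the full history of the greedy phase (which determines $I_T\subseteq[T]$ along with a set of revealed edges all lying inside $\tbinom{[T]}{2}$), every pair $\{u,v\}$ with $u\in I_T$ and $v\in W$ remains unrevealed and distributed as an independent $\mathrm{Bernoulli}(p)$. Consequently $|R'|\mid I_T\sim \mathrm{Bin}(|W|,b^{-\ell})$, whose mean is at least $(n-T-2)b^{-\ell}\ge \tfrac12 n^{\epsilon}\log_b^{3.5}(n)$ for large $n$. Since the definition~\eqref{def:parameters} of $r$ gives $r\le n^{\epsilon}\log_b^3(n)\le \E{|R'|}/2$ once $n$ is large enough, a standard Chernoff bound yields $\Pr[|R'|<r\mid I_T]\le e^{-\Omega(n^\epsilon)}$ uniformly in $I_T$ of size $\ell$. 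Combining steps~(i) and~(ii) via $\Pr[|I_T|=\ell,\:|R|=r]\ge(1-o(1))^2$ completes the proof.

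The only point requiring real care is the bookkeeping for the two applications of deferred decisions---one must verify that at the end of the greedy phase the revealed edges form a subset of $\tbinom{[T]}{2}$, so that the entire bipartite graph between $I_T$ and $W$ is still fresh randomness---but this is immediate from the algorithm's description. The remaining estimates need no finesse: the generous slack built into~\eqref{def:parameters}, namely a factor $\log_b^{11/2}(n)$ between $T$ and $\E{\tau_\ell}$ in step~(i) and a factor $\log_b^{1/2}(n)$ between $\E{|R'|}$ and $r$ in step~(ii), makes crude Markov/Chernoff bounds sufficient, so finer concentration inequalities for sums of geometric random variables are not required.
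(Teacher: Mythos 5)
Your proof is correct and follows essentially the same two-step decomposition as the paper: first show $\Pr[|I_T|=\ell]\to 1$ via the geometric-waiting-time structure of the greedy phase (deferred decisions), then condition on the outcome of the greedy rounds so that $|R'|\sim\mathrm{Bin}(|W|,(1-p)^\ell)$ and apply a Chernoff bound. The only variation is in step~(i), where the paper applies a union bound over $1\le i\le\ell$ together with the exponential tail $\Pr[\Delta_i>T/\ell]\le e^{-(1-p)^{i-1}T/\ell}$ to get the bound $\ell e^{-\ell}$, whereas you apply Markov's inequality to $\tau_\ell=\sum_k G_k$ to get $O(\log_b^{-11/2}n)$; both are more than sufficient since only $o(1)$ is needed.
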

\begin{proof}[Proof of Lemma~\ref{lem:bruteforce}]
We henceforth condition on the outcome of the greedy rounds and the search round,
where by Lemma~\ref{lem:properties} we may assume that $|I_T|=\ell$ and~$|R|=r$. 
Since all vertex-pairs in~$\binom{R}{2}$ are included as edges independently with probability~$p$ (since their status has not yet been revealed by the algorithm), 
it follows that~$G[R] \sim \mathbb{G}(r,p)$. 
Using well-known random graph results (see~\cite[Theorem~7.1]{JLR}, for example) on the size of the largest independent of $\mathbb{G}(r,p)$,  
as~$n \to \infty$ and thus~$r \to \infty$ it follows that
\[
\Pr\bigsqpar{\alpha(G[R]) \ge 2\log_b(r)-2\log_b \log_b(rp) +2\log_b(e/2)+1-1/p} \to 1,
\]
where we omitted the conditioning from our notation to avoid clutter. 
Using~$r=\lrfloor{n^{\eps}\log^3_b(n)}$ and the definition~\eqref{def:parameters} of~$\ell$, 
with probability tending to one as~$n \to \infty$, we thus have 
\begin{equation}
\begin{split}
|I_T|+|J| & = \ell + \min\bigcpar{\alpha(G[R]), \: \lrfloor{2\log_b(r)}} \\
& \ge \bigsqpar{(1-\epsilon)\log_b(n) - 3.5 \log_b\log_b(n) - 1} + \bigsqpar{2\eps\log_b(n) + 4 \log_b \log_b(n) - O(1)}\\
& > (1+\epsilon)\log_b(n) ,
\end{split}
\end{equation}
which completes the proof of Lemma~\ref{lem:bruteforce}. 
\end{proof}
Finally, we give the deferred proof of the auxiliary result Lemma~\ref{lem:properties}.
\begin{proof}[Proof of Lemma~\ref{lem:properties}]
We start by showing that $\Pr(|I_T|=\ell) \to 1$,
for which our argument is based on the analysis of Grimmett and McDiarmid~\cite{grimmett1975colouring} from~1975.
Let $\Delta_i$ denote the number of rounds required for the independent set to grow from size~$i-1$ to~$i$.
Note that a union bound argument gives 
\begin{equation*}
\begin{split}
\Pr[|I_T| < \ell]  & = \Pr[\Delta_1+\cdots+\Delta_{\ell}> T] \\
& \le \Pr[\Delta_i > T/\ell \text{ for some $1 \le i \le \ell$}] \\
& \le \sum_{1 \le i \le \ell }\Pr[\Delta_i > T/\ell].
\end{split}
\end{equation*}
Since in round~$t$ all vertex-pairs in~$I_{t-1} \times \{t\}$ are included as edges independently with probability~$p$ (since their status has not yet been revealed by the algorithm),
using the standard estimate~${1-x \le e^{-x}}$ it follows~that 
\begin{equation*}
\begin{split}
\Pr[\Delta_i > T/\ell] \le \bigpar{1-(1-p)^{i-1}}^{T/\ell} \le e^{- (1-p)^{i-1} T/\ell}.
\end{split}
\end{equation*}
Using~$i-1 \le \ell$ and the definitions~\eqref{def:parameters} of~$\ell$ and~$T$, 
it readily follows that 
\begin{equation*}
\begin{split}
\Pr[|I_T| < \ell] \le \ell \cdot e^{- (1-p)^{\ell} T/\ell} \le \ell e^{-\ell} \to 0,
\end{split}
\end{equation*}
which together with the deterministic bound $|I_T| \le \ell$ (guaranteed by construction of the algorithm) 
shows that $\Pr(|I_T|=\ell) \to 1$, 

We henceforth condition on the outcome of the greedy rounds, and assume that~$|I_T|=\ell$ by the analysis above. 
In this conditional space we will show that $\Pr(|R|=r) \to 1$, where we here and below omit the conditioning from our notation (to avoid clutter). 
Let $X$ denote the number of vertices in~$W$ that have no neighbors in~$I_T$. 
Since all vertex-pairs in~$W \times I_T$ are included as edges independently with probability~$p$ (since their status has not yet been revealed by algorithm), 
we have
\[
X \sim {\rm Binomial}\Bigpar{|W|, \: (1-p)^{\ell}}.
\]
Recalling that~$|W|=n-(T+2)$, by definition~\eqref{def:parameters} of~$\ell$ and~$r$ we thus have, say, 
\[
\EE[X] = |W| \cdot (1-p)^{\ell} 
\ge (1-o(1)) \cdot  n^{\eps}\log_b^{3.5}(n) > 2r .
\]
Using standard  Chernoff bounds (see~\cite[Theorem~2.1]{JLR}), it follows that 
\[
\Pr[|R| < r] = \Pr[X < r] \le \Pr\bigsqpar{X \le \EE[X]/2} \le e^{-\EE[X]/8} \le e^{-r/4}\to 0 ,
\]
which together with the deterministic bound $|R| \le r$ (guaranteed by construction of the algorithm) 
shows that $\Pr[|R|=r] \to 1$, completing the proof of Lemma~\ref{lem:properties}. 
\end{proof}

\begin{remark}
The above proof of Theorem~\ref{thm:algorithm} carries over when $\eps \gg \log\log(n)/\log(n)$, 
with running-time $n^{2\eps^2 \log_b(n)+O(1)}$. 
In particular, given any constant~$C > 0$, for $\eps := C/\sqrt{\log_b(n)}$ it follows that 
the above-described online algorithm~$\cA$ whp finds an independent set $I$ in~$\mathbb{G}(n,p)$ of size 
\[
|I| \ge {\log_{b}(np) + C \sqrt{\log_{b}(np)}}
\]
using running-time $n^{2C^2 + O(1)}$, i.e., in polynomial time. 
\end{remark}

\bibliographystyle{amsalpha}
\bibliography{bib2}

@article{wein2025computational,
  title={Computational Complexity of Statistics: New Insights from Low-Degree Polynomials},
  author={Wein, Alexander S},
  journal={arXiv preprint arXiv:2506.10748},
  year={2025}
}

@inproceedings{khot2001improved,
  title={Improved inapproximability results for maxclique, chromatic number and approximate graph coloring},
  author={Khot, Subhash},
  booktitle={Proceedings 42nd IEEE Symposium on Foundations of Computer Science},
  pages={600--609},
  year={2001},
  organization={IEEE}
}

@article{hastad-clique,
author = {Johan H{\aa}stad},
title = {{Clique is hard to approximate within $n^{1-\epsilon}$}},
volume = {182},
journal = {Acta Mathematica},
number = {1},
publisher = {Institut Mittag-Leffler},
pages = {105 -- 142},
year = {1999},
}

@article{vafa2025symmetric,
  title={Symmetric Perceptrons, Number Partitioning and Lattices},
  author={Vafa, Neekon and Vaikuntanathan, Vinod},
  journal={arXiv preprint arXiv:2501.16517},
  year={2025}
}

@article{li2024some,
  title={Some easy optimization problems have the overlap-gap property},
  author={Li, Shuangping and Schramm, Tselil},
  journal={arXiv preprint arXiv:2411.01836},
  year={2024}
}

@article{du2025algorithmic,
  title={The algorithmic phase transition of random graph alignment problem},
  author={Du, Hang and Gong, Shuyang and Huang, Rundong},
  journal={Probability Theory and Related Fields},
  pages={1--56},
  year={2025},
  publisher={Springer}
}

@article{feige2020finding,
  title={Finding cliques using few probes},
  author={Feige, Uriel and Gamarnik, David and Neeman, Joe and R{\'a}cz, Mikl{\'o}s Z and Tetali, Prasad},
  journal={Random Structures \& Algorithms},
  volume={56},
  number={1},
  pages={142--153},
  year={2020},
  publisher={Wiley Online Library}
}

@article{alweiss2021subgraph,
  title={On the subgraph query problem},
  author={Alweiss, Ryan and Hamida, Chady Ben and He, Xiaoyu and Moreira, Alexander},
  journal={Combinatorics, Probability and Computing},
  volume={30},
  number={1},
  pages={1--16},
  year={2021},
  publisher={Cambridge University Press}
}

@article{racz2019finding,
  title={Finding a planted clique by adaptive probing},
  author={R{\'a}cz, Mikl{\'o}s Z and Schiffer, Benjamin},
  journal={arXiv preprint arXiv:1903.12050},
  year={2019}
}

@article{feige2021tight,
  title={A tight bound for the clique query problem in two rounds},
  author={Feige, Uriel and Ferster, Tom},
  journal={arXiv preprint arXiv:2112.06072},
  year={2021}
}

@article{huang2025strong,
  title={Strong Low Degree Hardness for Stable Local Optima in Spin Glasses},
  author={Huang, Brice and Sellke, Mark},
  journal={arXiv preprint arXiv:2501.06427},
  year={2025}
}

@article{gamarnik2025turing,
  title={Turing in the shadows of Nobel and Abel: an algorithmic story behind two recent prizes},
  author={Gamarnik, David},
  journal={arXiv preprint arXiv:2501.15312},
  year={2025}
}

@inproceedings{matula1970complete,
  title={On the complete subgraphs of a random graphProc},
  author={Matula, David W},
  booktitle={Second Chapel Hill Conf. on Combinatorial Mathematics and its Applications (Univ. North Carolina, Chapel Hill, NC, 1970)},
year={1970},
  pages={356--369}
}

@article{gamarnik2022disordered,
  title={Disordered systems insights on computational hardness},
  author={Gamarnik, David and Moore, Cristopher and Zdeborov{\'a}, Lenka},
  journal={Journal of Statistical Mechanics: Theory and Experiment},
  volume={2022},
  number={11},
  pages={114015},
  year={2022},
  publisher={IOP Publishing}
}

@book{matula1976largest,
  title={The largest clique size in a random graph},
  author={Matula, David W},
  year={1976},
  publisher={Department of Computer Science, Southern Methodist University Dallas, TX}
}

@inproceedings{grimmett1975colouring,
  title={On colouring random graphs},
  author={Grimmett, Geoffrey R and McDiarmid, Colin JH},
  booktitle={Mathematical Proceedings of the Cambridge Philosophical Society},
  volume={77},
  number={2},
  pages={313--324},
  year={1975},
  organization={Cambridge University Press}
}

@book {JLR,
    AUTHOR = {Janson, Svante and Luczak, Tomasz and Rucinski, Andrzej},
     TITLE = {Random graphs},
 PUBLISHER = {Wiley-Interscience, New York},
      YEAR = {2000},
     PAGES = {xii+333},
      ISBN = {0-471-17541-2}
}

@article{lauer2007large,
  title={Large independent sets in regular graphs of large girth},
  author={Lauer, Joseph and Wormald, Nicholas},
  journal={Journal of Combinatorial Theory, Series B},
  volume={97},
  number={6},
  pages={999--1009},
  year={2007},
  publisher={Elsevier}
}

@inproceedings{bollobas1976cliques,
  title={Cliques in random graphs},
  author={Bollob{\'a}s, B{\'e}la and Erd{\"o}s, Paul},
  booktitle={Mathematical Proceedings of the Cambridge Philosophical Society},
  volume={80},
  number={3},
  pages={419--427},
  year={1976},
  organization={Cambridge University Press}
}

@article{li2024discrepancy,
  title={Discrepancy algorithms for the binary perceptron},
  author={Li, Shuangping and Schramm, Tselil and Zhou, Kangjie},
  journal={arXiv preprint arXiv:2408.00796},
  year={2024}
}

@article{gamarnik2023algorithmic,
  title={Algorithmic obstructions in the random number partitioning problem},
  author={Gamarnik, David and K{\i}z{\i}lda{\u{g}}, Eren C},
  journal={The Annals of Applied Probability},
  volume={33},
  number={6B},
  pages={5497--5563},
  year={2023},
  publisher={Institute of Mathematical Statistics}
}

@article{wu2018statistical,
  title={Statistical problems with planted structures: Information-theoretical and computational limits},
  author={Wu, Yihong and Xu, Jiaming},
  journal={arXiv preprint arXiv:1806.00118},
  year={2018}
}

@inproceedings{kunisky2019notes,
  title={Notes on computational hardness of hypothesis testing: Predictions using the low-degree likelihood ratio},
  author={Kunisky, Dmitriy and Wein, Alexander S and Bandeira, Afonso S},
  booktitle={ISAAC Congress (International Society for Analysis, its Applications and Computation)},
  pages={1--50},
  year={2019},
  organization={Springer}
}

@article{gamarnik2023shattering,
  title={Shattering in the Ising Pure $ p $-Spin Model},
  author={Gamarnik, David and Jagannath, Aukosh and K{\i}z{\i}lda{\u{g}}, Eren C},
  journal={arXiv preprint arXiv:2307.07461},
  year={2023}
}

@article{huang2021tight,
  title={Tight Lipschitz Hardness for Optimizing Mean Field Spin Glasses},
  author={Huang, Brice and Sellke, Mark},
  journal={arXiv preprint arXiv:2110.07847},
  year={2021}
}

@article{rahman2017local,
  title={Local algorithms for independent sets are half-optimal},
  author={Rahman, Mustazee and Virag, Balint},
  journal={The Annals of Probability},
  volume={45},
  number={3},
  pages={1543--1577},
  year={2017},
  publisher={Institute of Mathematical Statistics}
}

@article{wein2020optimal,
  title={Optimal low-degree hardness of maximum independent set},
  author={Wein, Alexander S},
  journal={Mathematical Statistics and Learning},
  volume={4},
  number={3},
  pages={221--251},
  year={2022}
}

@inproceedings{ajtai1996generating,
  title={Generating hard instances of lattice problems},
  author={Ajtai, Mikl{\'o}s},
  booktitle={Proceedings of the twenty-eighth annual ACM symposium on Theory of computing},
  pages={99--108},
  year={1996}
}

@article{GK-SK-AAP,
author = {David Gamarnik and Eren C. Kızıldağ},
title = {{Computing the partition function of the Sherrington–Kirkpatrick model is hard on average}},
volume = {31},
journal = {The Annals of Applied Probability},
number = {3},
publisher = {Institute of Mathematical Statistics},
pages = {1474 -- 1504},
keywords = {average-case hardness, computational complexity, list decoding, Partition function, Sherrington–Kirkpatrick model, Spin glasses, statistical physics},
year = {2021},
doi = {10.1214/20-AAP1625},
URL = {https://doi.org/10.1214/20-AAP1625}
}

@article{rakhlin2010online,
  title={Online learning: Random averages, combinatorial parameters, and learnability},
  author={Rakhlin, Alexander and Sridharan, Karthik and Tewari, Ambuj},
  journal={Advances in neural information processing systems},
  volume={23},
  year={2010}
}

@inproceedings{rakhlin2011online-a,
  title={Online learning: Beyond regret},
  author={Rakhlin, Alexander and Sridharan, Karthik and Tewari, Ambuj},
  booktitle={Proceedings of the 24th Annual Conference on Learning Theory},
  pages={559--594},
  year={2011},
  organization={JMLR Workshop and Conference Proceedings}
}

@inproceedings{rakhlin2013online,
  title={Online learning with predictable sequences},
  author={Rakhlin, Alexander and Sridharan, Karthik},
  booktitle={Conference on Learning Theory},
  pages={993--1019},
  year={2013},
  organization={PMLR}
}

@article{hazan2016introduction,
  title={Introduction to online convex optimization},
  author={Hazan, Elad and others},
  journal={Foundations and Trends{\textregistered} in Optimization},
  volume={2},
  number={3-4},
  pages={157--325},
  year={2016},
  publisher={Now Publishers, Inc.}
}

@article{rakhlin2011online-b,
  title={Online learning: Stochastic, constrained, and smoothed adversaries},
  author={Rakhlin, Alexander and Sridharan, Karthik and Tewari, Ambuj},
  journal={Advances in neural information processing systems},
  volume={24},
  year={2011}
}

@article{bansalspenceronline,
title = "On-line balancing of random inputs",
keywords = "discrepancy, online algorithms, random vectors",
author = "Nikhil Bansal and Spencer, {Joel H.}",
year = "2020",
month = dec,
day = "1",
doi = "10.1002/rsa.20955",
language = "English (US)",
volume = "57",
pages = "879--891",
journal = "Random Structures and Algorithms",
issn = "1042-9832",
publisher = "John Wiley and Sons Ltd",
number = "4",
}

@inproceedings{achlioptas2008algorithmic,
  title={Algorithmic barriers from phase transitions},
  author={Achlioptas, Dimitris and Coja-Oghlan, Amin},
  booktitle={2008 49th Annual IEEE Symposium on Foundations of Computer Science},
  pages={793--802},
  year={2008},
  organization={IEEE}
}

@article{karp1976probabilistic,
  title={The probabilistic analysis of some combinatorial search algorithms.},
  author={Karp, Richard M.},
  year={1976}
}

@article{frieze1990independence,
  title={On the independence number of random graphs},
  author={Frieze, Alan M},
  journal={Discrete Mathematics},
  volume={81},
  number={2},
  pages={171--175},
  year={1990},
  publisher={Elsevier}
}

@article{frieze1992independence,
  title={On the independence and chromatic numbers of random regular graphs},
  author={Frieze, Alan M and Luczak, T},
  journal={Journal of Combinatorial Theory, Series B},
  volume={54},
  number={1},
  pages={123--132},
  year={1992},
  publisher={Elsevier}
}

@article{bandeira2018notes,
  title={Notes on computational-to-statistical gaps: predictions using statistical physics},
  author={Bandeira, Afonso S and Perry, Amelia and Wein, Alexander S},
  journal={arXiv preprint arXiv:1803.11132},
  year={2018}
}

@article{boix2021average,
  title={The Average-Case Complexity of Counting Cliques in Erd{\"o}s-R\'{e}nyi Hypergraphs},
  author={Boix-Adser{\`a}, Enric and Brennan, Matthew and Bresler, Guy},
  journal={SIAM Journal on Computing},
  number={0},
  pages={FOCS19--39},
  year={2021},
  publisher={SIAM}
}

@article{kizildaug2023planted,
  title={Planted random number partitioning problem},
  author={K{\i}z{\i}lda{\u{g}}, Eren C},
  journal={arXiv preprint arXiv:2309.15115},
  year={2023}
}

@article{kizildaug2023sharp,
  title={Sharp Phase Transition for Multi Overlap Gap Property in Ising $ p $-Spin Glass and Random $ k $-SAT Models},
  author={K{\i}z{\i}lda{\u{g}}, Eren C},
  journal={arXiv preprint arXiv:2309.09913},
  year={2023}
}

@article{huang2023algorithmic,
  title={Algorithmic threshold for multi-species spherical spin glasses},
  author={Huang, Brice and Sellke, Mark},
  journal={arXiv preprint arXiv:2303.12172},
  year={2023}
}

@article{sellke2025tight,
  title={Tight Low Degree Hardness for Optimizing Pure Spherical Spin Glasses},
  author={Sellke, Mark},
  journal={arXiv preprint arXiv:2504.04632},
  year={2025}
}

@inproceedings{gamarnik2022algorithms,
  title={Algorithms and barriers in the symmetric binary perceptron model},
  author={Gamarnik, David and K{\i}z{\i}lda{\u{g}}, Eren C and Perkins, Will and Xu, Changji},
  booktitle={2022 IEEE 63rd Annual Symposium on Foundations of Computer Science (FOCS)},
  pages={576--587},
  year={2022},
  organization={IEEE}
}

@inproceedings{gamarnik2023geometric,
  title={Geometric barriers for stable and online algorithms for discrepancy minimization},
  author={Gamarnik, David and Kizilda{\u{g}}, Eren C and Perkins, Will and Xu, Changji},
  booktitle={The Thirty Sixth Annual Conference on Learning Theory},
  pages={3231--3263},
  year={2023},
  organization={PMLR}
}

@inproceedings{bayati2010combinatorial,
  title={Combinatorial approach to the interpolation method and scaling limits in sparse random graphs},
  author={Bayati, Mohsen and Gamarnik, David and Tetali, Prasad},
  booktitle={Proceedings of the forty-second ACM symposium on Theory of computing},
  pages={105--114},
  year={2010}
}

@article{jerrum1992large,
  title={Large cliques elude the Metropolis process},
  author={Jerrum, Mark},
  journal={Random Structures \& Algorithms},
  volume={3},
  number={4},
  pages={347--359},
  year={1992},
  publisher={Wiley Online Library}
}

@article{gamarnik2017,
author = "Gamarnik, David and Sudan, Madhu",
doi = "10.1214/16-AOP1114",
fjournal = "Annals of Probability",
journal = "Ann. Probab.",
month = "07",
number = "4",
pages = "2353--2376",
publisher = "The Institute of Mathematical Statistics",
title = "Limits of local algorithms over sparse random graphs",
url = "https://doi.org/10.1214/16-AOP1114",
volume = "45",
year = "2017"
}

@article{mezard2005clustering,
  title={Clustering of solutions in the random satisfiability problem},
  author={M{\'e}zard, Marc and Mora, Thierry and Zecchina, Riccardo},
  journal={Physical Review Letters},
  volume={94},
  number={19},
  pages={197205},
  year={2005},
  publisher={APS}
}

@inproceedings{achlioptas2006solution,
  title={On the solution-space geometry of random constraint satisfaction problems},
  author={Achlioptas, Dimitris and Ricci-Tersenghi, Federico},
  booktitle={Proceedings of the thirty-eighth annual ACM symposium on Theory of computing},
  pages={130--139},
  year={2006}
}

@article{gamarnikjagannath2021overlap,
  title={The overlap gap property and approximate message passing algorithms for $ p $-spin models},
  author={Gamarnik, David and Jagannath, Aukosh},
  journal={The Annals of Probability},
  volume={49},
  number={1},
  pages={180--205},
  year={2021},
  publisher={Institute of Mathematical Statistics}
}

@article{gamarnik2021circuit,
  title={Circuit Lower Bounds for the p-Spin Optimization Problem},
  author={Gamarnik, David and Jagannath, Aukosh and Wein, Alexander S},
  journal={arXiv preprint arXiv:2109.01342},
  year={2021}
}

@article{gamarnik2018finding,
  title={Finding a large submatrix of a Gaussian random matrix},
  author={Gamarnik, David and Li, Quan},
  journal={The Annals of Statistics},
  volume={46},
  number={6A},
  pages={2511--2561},
  year={2018},
  publisher={Institute of Mathematical Statistics}
}

@article{chen2019suboptimality,
  title={Suboptimality of local algorithms for a class of max-cut problems},
  author={Chen, Wei-Kuo and Gamarnik, David and Panchenko, Dmitry and Rahman, Mustazee},
  journal={The Annals of Probability},
  volume={47},
  number={3},
  pages={1587--1618},
  year={2019},
  publisher={Institute of Mathematical Statistics}
}

@inproceedings{gamarnik2020low,
  title={Low-degree hardness of random optimization problems},
  author={Gamarnik, David and Jagannath, Aukosh and Wein, Alexander S},
  booktitle={2020 IEEE 61st Annual Symposium on Foundations of Computer Science (FOCS)},
  pages={131--140},
  year={2020},
  organization={IEEE}
}

@article{bresler2021algorithmic,
  title={The Algorithmic Phase Transition of Random $ k $-SAT for Low Degree Polynomials},
  author={Bresler, Guy and Huang, Brice},
  journal={arXiv preprint arXiv:2106.02129},
  year={2021}
}

@inproceedings{gamarnik2014limits,
  title={Limits of local algorithms over sparse random graphs},
  author={Gamarnik, David and Sudan, Madhu},
  booktitle={Proceedings of the 5th conference on Innovations in theoretical computer science},
  pages={369--376},
  year={2014}
}

@article{gamarnik2021overlap,
  title={The overlap gap property: A topological barrier to optimizing over random structures},
  author={Gamarnik, David},
  journal={Proceedings of the National Academy of Sciences},
  volume={118},
  number={41},
  year={2021},
  publisher={National Acad Sciences}
}

@inproceedings{AIM2024,
  title={Problem List},
  booktitle={Problem List of the AIM Workshop Low-degree polynomial methods in average-case complexity (organizers Sam Hopkins, Tselil Schramm, and Alex Wein)},
  year={2024}
}

@inproceedings{perkins2024hardness,
  title={On the hardness of finding balanced independent sets in random bipartite graphs},
  author={Perkins, Will and Wang, Yuzhou},
  booktitle={Proceedings of the 2024 Annual ACM-SIAM Symposium on Discrete Algorithms (SODA)},
  pages={2376--2397},
  year={2024},
  organization={SIAM}
}

@article{hatami2014limits,
  title={Limits of locally--globally convergent graph sequences},
  author={Hatami, Hamed and Lov{\'a}sz, L{\'a}szl{\'o} and Szegedy, Bal{\'a}zs},
  journal={Geometric and Functional Analysis},
  volume={24},
  number={1},
  pages={269--296},
  year={2014},
  publisher={Springer}
}

@article{hypergraph-wip,
  title={Work in progress (2026+)},
  author={Dhawan, Abhishek and Dinh, Nhi U. and K{\i}z{\i}lda{\u{g}}, Eren C. and Maitra, Neeladri and \c{S}ahin, Bayram A.},
  year={2026+}
}

@article{dhawan2025sharp,
  title={Sharp Online Hardness for Large Balanced Independent Sets},
  author={Dhawan, Abhishek and K{\i}z{\i}lda{\u{g}}, Eren C and Maitra, Neeladri},
  journal={arXiv preprint arXiv:2508.20785},
  year={2025}
}

@article{dhawan2024low,
  title={The low-degree hardness of finding large independent sets in sparse random hypergraphs},
  author={Dhawan, Abhishek and Wang, Yuzhou},
  journal={arXiv preprint arXiv:2404.03842},
  year={2024}
}

@inproceedings {F2014,
    AUTHOR = {Frieze, Alan},
     TITLE = {Random structures and algorithms},
 BOOKTITLE = {Proceedings of the {I}nternational {C}ongress of
              {M}athematicians---{S}eoul 2014. {V}ol. 1},
     PAGES = {311--340},
 PUBLISHER = {Kyung Moon Sa, Seoul},
      YEAR = {2014},
      ISBN = {978-89-6105-804-9; 978-89-6105-803-2},
   MRCLASS = {05C80 (05D40 68Q87)},
  MRNUMBER = {3728474},
}

\begin{appendix}

\section{Appendix}
%
\subsection{Proof Sketch of Proposition~\ref{Prop:Unstable}}\label{apx:Prop:Unstable}
\begin{proof}[Proof Sketch of Proposition~\ref{Prop:Unstable}]
Each of the $\binom{n}{2}$ edges is included independently into ${G_1\sim \G(n,p)}$ with probability~$p$.
We construct~$G_2$ as a copy of~$G_1$, 
with the wrinkle that we resample the status of the vertex-pair~${1,2}$ (which is independently included as an edge  with probability~$p$); note that $G_2\sim \G(n,p)$. 
Let $I_1$ and $I_2$ denote the independent sets returned by Algorithm~\ref{Alg:Greedy} run on~$G_1$ and~$G_2$. Note that
\[
\mathbb{P}\bigl[2 \notin I_1, \: 2\in I_2\bigr] =  \mathbb{P}\bigl[\{1,2\} \in E(G_1), \: \{1,2\} \not\in E(G_2)\bigr] =  p(1-p) = \Omega(1).
\]
Note that conditional on the event $\{2\notin I_1, \: 2\in I_2\}$, vertex~$2 \in I_2$ has $1+{\rm Binomial}(|I_1|-1,p)$ neighbors in~$I_1$, which is with high probability at least $(1-\eps)^2p\log_b(n)$, say.
Since none of these neighbors are contained in~$I_2$, this shows that, by simply resampling one edge of the random graph~$\G(n,p)$, the independent sets returned by Algorithm~\ref{Alg:Greedy} can differ substantially with probability $\Omega(1)$.
\end{proof}

\subsection{Proof of Lemma~\ref{lemma:Logb-LowerBd}}\label{apx:lemma:Logb-LowerBd}

\begin{proof}[Proof of Lemma~\ref{lemma:Logb-LowerBd}]   
Note that $-\log(1-p) = p + O(p^2)$ for~$p \le 1/2$, say. Choosing $0<p_0\le 1/2$ sufficiently small, we thus obtain that for $d/n\le p\le p_0$, we have $-\log(1-p) \le 9p$ and $\log(np) \ge \log(d)$. Consequently,  $\log_b(np) \ge \log (d)/(9p)$ in~\eqref{eq:logbnp:estimate}.

In the remaining case, $p_0 \le p \le 1-n^{-1/d}$, the crux is that $-\log(1-p) \le \log(n)/d$ and $\log(np) \ge \log(n p_0) \ge \log(n)/2$ for sufficiently large~$n$ (depending only on~$p_0$), 
so that $\log_b(np) \ge d/2 \ge p_0 d/(2p)$  in~\eqref{eq:logbnp:estimate}.
This establishes~\eqref{eq:logbnp:estimate} by checking that $\min\{\log(d), \: d\} = \log(d)$ for all~$d \ge 1$.
\end{proof}

\end{appendix}

\end{document}